\theoremstyle{plain}
\newtheorem{theorem}{Theorem}[section]
\newtheorem{lemma}[theorem]{Lemma}
\theoremstyle{definition}
\theoremstyle{remark}
\newtheorem{remark}[theorem]{Remark}
\numberwithin{equation}{section}
\numberwithin{theorem}{section}
\newcommand{\upbar}[1]{\,\overline{\! #1}}
\renewcommand{\epsilon}{\varepsilon}
\renewcommand{\tilde}{\widetilde}
\newcommand{\de}{\mathrm{d}}
\newcommand{\ex}{\mathrm{e}}
\newcommand{\ve}{\varepsilon}
  \newcommand{\miniscule}{\@setfontsize\miniscule{4}{5}}
  \newcommand{\miniscule}{\@setfontsize\miniscule{5}{6}}
  \newcommand{\miniscule}{\@setfontsize\miniscule{5}{6}}
\title[The interpolation method]{Remarks on the interpolation method}
\author{Roberto Boccagna}
\address{\small{Universit{\`a} dell'Aquila\\ Via Vetoio, Loc. Coppito\\ 67010 L'Aquila, Italia.}}
\author{Davide Gabrielli}
\address{\small{Universit{\`a} dell'Aquila\\ Via Vetoio, Loc. Coppito\\ 67010 L'Aquila, Italia.}}
\begin{document}
	
	\maketitle
	\thispagestyle{empty}
		\begin{abstract}

	We discuss a generalization of the conditions of validity of the interpolation
	method for the density of quenched free energy of mean field spin glasses. The condition
	is written just in terms of the $L^2$ metric structure of the Gaussian random variables. As an example of application we deduce the existence of the thermodynamic limit for a GREM model with infinite branches for which the classic conditions of validity fail.
	
	\bigskip
	
	\noindent {\em Keywords}: Spin glasses, interpolation method, thermodynamic limit.
	
	\noindent{\em AMS 2010 Subject Classification}:
	60G15, 82D30  
	\end{abstract}

\section{Introduction}

The interpolation method is a simple but powerful technique used to prove inequalities for Gaussian random vectors (see for example \cite{JPP} and \cite{K}). This method has great relevance in the field of Mathematical and Theoretical Physics since it represents an essential ingredient in the study of mean field spin glasses. In the breakthrough paper \cite{GT} it has been used to prove the existence of the thermodynamic limit for the quenched density of free energy for the Sherrington-Kirkpatrick model. This was a longstanding problem and its solution was the turning point towards the proof of the Parisi Formula \cite{T}.

Spin glasses are simple mathematical models for disordered systems whose rigorous analysis is indeed a \emph{challenge for mathematicians}. We refer to \cite{P}, \cite{Tb} the mathematically interested reader and to \cite{MPV} the physically interested one. Among plenty of models, one of the most studied is that introduced by Sherrington and Kirkpatrick in \cite{SK} as a solvable elementary model. Indeed the structure of the solution turned out to be much more rich and complex than expected and was build up in a series of papers by Parisi (see \cite{MPV} for a detailed discussion). A rigorous proof of the Parisi conjectured solution was missing for a long time and the interpolation method played a key role in its proof. See \cite{G} for a review on this.

Using the same idea of \cite{GT}, the authors of \cite{CONT} proposed a general setting for the interpolation method in the framework of mean field spin glasses. Furthermore, they successfully applied this technique to prove the existence of the thermodynamic limit for the \textit{Generalized Random Energy Model} (GREM, a family of models introduced in \cite{DERR}) with a finite number of levels.

The ``classical'' hypothesis under which the interpolation method can be applied to the quenched free energy of mean field spin glasses consists of a collection of equalities and inequalities for the covariance matrix of the underlying multivariate Gaussian process. We show that less restrictive conditions are actually needed. More precisely, we show that the method works under conditions that involve just the $L^2$ metric structure of the Gaussian random vectors.
By the correspondence in \cite{S}, \cite{IEEE} this is always an Euclidean metric structure. A condition of this type is very natural since the quenched free energy depends on the distribution of the Gaussian random vector only through its metric structure. This gives an interesting geometric flavor and interpretation.
A similar inequality was obtained through a tricky computation in the framework of Sudakov-Fernique inequalities in \cite{C}. Here we show that the result follows by a general argument involving the special form of the function and that, in a sense, is the best possible. As an example of application of the improved conditions, we consider a GREM model with infinite levels and deduce the existence of the thermodynamic limit for the quenched density of free energy. Indeed, in this case the usual conditions of validity of the interpolation method used in \cite{GT}, \cite{CONT} fail. We can deduce therefore the existence of the thermodynamic limit directly using the simple
argument of the interpolation method. We refer to \cite{R}, \cite{BK1} and \cite{BK2} for the beautiful mathematics involved in the limit of such kind of models.

\smallskip

The structure of the paper is the following.

In Section 2 we briefly recall the basics of the interpolation method together with the conditions used in \cite{GT} and \cite{CONT}; we then discuss the Euclidean metric structure associated to any Gaussian random vector and finally show our generalized conditions.

In Section 3 we discuss two examples. The first one is the Sherrington-Kirkpatrick model. This is done simply to recall the basic mechanism and idea of application.
The second example is a GREM model with infinite levels for which it is necessary to use our generalized conditions to prove the existence of the thermodynamic limit.

In the Appendix we collect some elementary Lemmas.

\section{The interpolation method}
\label{sec:particle systems}

\subsection{The interpolation method}

Let $X=(X_1,\dots ,X_n)$ be a $n$-dimensional zero mean Gaussian random vector having covariance matrix $C$. The $n\times n$ symmetric matrix $C$ is non-negative definite and the elements are defined by $C_{i,j}\coloneqq\mathbb E\left[X_iX_j\right]$. When $C$ is positive definite then the distributions of $X$
is absolutely continuous with respect to the Lebesgue measure on
$\mathbb R^n$ and the density is
\begin{equation}
\phi_{C}\left(x\right)\coloneqq\frac{1}{\sqrt{\left(2\pi\right)^n\textrm{det}\left(C\right)}}\mathrm{e}^{-\frac 12
	(x,C^{-1}x)}\,, \label{den}
\end{equation}
where $\left(\,\cdot\,,\,\cdot\,\right)$ denotes the Euclidean scalar
product in $\mathbb R^n$. We restrict to the case of positive definite matrices
since the other cases can be deduced by a limiting procedure.
We have the Fourier transform representation
\begin{equation}
\phi_C\left(x\right)=\frac{1}{(2\pi)^n}\int_{\mathbb R^n}\de\lambda \,
\ex^{-i(\lambda,x)}\ex^{-\frac 12 (\lambda,C\lambda)}\,. \label{fou}
\end{equation}
We denote by $\mathrm{Tr}\,(\,\cdot\,)$ the trace of a matrix and consider $\upbar{\mathcal{C}}$ the set of non negative definite symmetric matrices  endowed with the Hilbert-Schmidt scalar product
\begin{equation}
\textrm{Tr}\left(AB\right)\,, \qquad A, B\in \upbar{\mathcal C}\,.
\end{equation}
The open set of positive definite symmetric matrices corresponds to $\mathcal C$.

Let $\phi: \mathcal C \times \mathbb
R^n\to \mathbb R^+$ as defined in \eqref{den}. By \eqref{fou} and a direct computation we have
\begin{equation}
\frac{\partial \phi_C\left(x\right)}{\partial C_{i,j}}=\frac{\partial \phi_C\left(x\right)}{\partial C_{j,i}}=\frac{\partial^2\phi_C\left(x\right)}{\partial x_i\partial x_j}\,,
\label{derbas1}
\end{equation}
and
\begin{equation}
\frac{\partial \phi_C\left(x\right)}{\partial C_{i,i}}=\frac
12\frac{\partial^2\phi_C\left(x\right)}{\partial x_i^2}\,.\label{derbas2}
\end{equation}
Recall that in the above formulas $C$ is a symmetric matrix so that the variations in the computation of \eqref{derbas1} are constructed varying symmetrically the matrix $C$. More precisely let $E^{\{i,j\}}$ with $i\neq j$ be the symmetric matrix such that $E^{\{i,j\}}_{i,j}=E^{\{i,j\}}_{j,i}=1$ and having all the remaining elements equal to zero. Given $F:\mathcal C\to \mathbb R$ we define
\begin{equation}
\label{defdersym}
\frac{\partial F\left(C\right)}{\partial C_{j,i}}=\frac{\partial F\left(C\right)}{\partial C_{i,j}}\coloneqq\lim_{\delta\to 0}\frac{F\left(C+\delta E^{\{i,j\}}\right)-F\left(C\right)}{\delta}\,.
\end{equation}

Consider now $f:\mathbb R^n\to \mathbb R$ a $C^2$ function with
\textit{moderate growth at infinity}, for example such that $|f(x)|\leq \ex^{\lambda|x|}$ for a suitable constant $\lambda\geq 0$. This technical condition is related to the validity of some integrations by parts.
We call $\nabla^2f\left(x\right)$ the Hessian matrix of $f$ at $x$, that is the symmetric matrix having elements
\begin{equation}
\left(\nabla^2f\right)_{i,j}\left(x\right)\coloneqq\frac{\partial^2f\left(x\right)}{\partial x_i
	\partial x_j}\,. \nonumber
\end{equation}

The following result is the interpolation method. For the readers convenience we give the short proof.
\begin{lemma}{\bf [Interpolation method]}\label{intmeth}
Consider two mean zero Gaussian random vectors $X, Y$ having covariance matrices respectively given by $C^X$ and $C^Y$. Consider a $C^2$ function $f$ with moderate growth we have
\begin{equation}\label{intf}
\mathbb E\left[f\left(Y\right)\right]-\mathbb E\left[f\left(X\right)\right]=\frac 12 \int_0^1 \de t\,\mathbb E\Big[\mathrm{Tr}\Big(\left(C^Y-C^X\right)\nabla^2f\left(Z\left(t\right)\right)\Big)\Big],
\end{equation}
where
\begin{equation}\label{tutututu}
Z(t)=\sqrt{t}X+\sqrt{(1-t)}Y\,,
\end{equation}
and $X,Y$ are two independent copies of the random vectors.
\end{lemma}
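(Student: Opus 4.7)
The plan is to interpolate through the one-parameter family of Gaussian densities rather than through the representation \eqref{tutututu} itself, which is just convenient shorthand for a Gaussian with the interpolated covariance. Set $C(t) \coloneqq t\, C^X+(1-t)\,C^Y$, note that for each $t\in[0,1]$ this is a valid covariance (convex combination of non-negative definite matrices), and observe that $Z(t) = \sqrt{t}\,X + \sqrt{1-t}\,Y$ with $X, Y$ independent zero-mean Gaussians is zero-mean Gaussian with covariance exactly $C(t)$. Define
\begin{equation*}
\varphi(t) \coloneqq \mathbb E\bigl[f(Z(t))\bigr] = \int_{\mathbb R^n} f(z)\,\phi_{C(t)}(z)\,\de z,
\end{equation*}
so that the identity to prove reduces to $\varphi(0)-\varphi(1)=\frac12\int_0^1\mathbb E[\mathrm{Tr}((C^Y-C^X)\nabla^2 f(Z(t)))]\,\de t$.

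Next I would differentiate under the integral sign. The curve $t\mapsto C(t)$ is affine with derivative $C^X-C^Y$, a symmetric matrix. Using the chain rule together with the symmetric convention \eqref{defdersym} and the key identities \eqref{derbas1}--\eqref{derbas2}, I obtain
\begin{equation*}
\frac{\de}{\de t}\phi_{C(t)}(z) = \sum_i (C^X_{i,i}-C^Y_{i,i})\,\frac{\partial\phi_{C(t)}}{\partial C_{i,i}} + \sum_{i<j}(C^X_{i,j}-C^Y_{i,j})\,\frac{\partial\phi_{C(t)}}{\partial C_{i,j}} = \tfrac12\,\mathrm{Tr}\bigl((C^X-C^Y)\,\nabla^2_z \phi_{C(t)}(z)\bigr),
\end{equation*}
where the factor $\frac12$ comes from reassembling the diagonal and off-diagonal terms into the full symmetric trace.

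Then I would integrate by parts twice in $z$ to move the Hessian from $\phi_{C(t)}$ to $f$:
\begin{equation*}
\varphi'(t) = \tfrac12\int_{\mathbb R^n} \mathrm{Tr}\bigl((C^X-C^Y)\nabla^2 f(z)\bigr)\,\phi_{C(t)}(z)\,\de z = \tfrac12\,\mathbb E\bigl[\mathrm{Tr}\bigl((C^X-C^Y)\nabla^2 f(Z(t))\bigr)\bigr].
\end{equation*}
Integrating on $[0,1]$ and using $Z(1)=X$, $Z(0)=Y$ gives $\mathbb E[f(X)]-\mathbb E[f(Y)] = \frac12\int_0^1\mathbb E[\mathrm{Tr}((C^X-C^Y)\nabla^2 f(Z(t)))]\,\de t$, which is exactly \eqref{intf} after flipping the sign.

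The only genuinely delicate step is justifying differentiation under the integral and the two integrations by parts; this is where the \emph{moderate growth} hypothesis $|f(x)|\le \ex^{\lambda|x|}$ is used, since it ensures that $f$ and its first two derivatives (or, after integrating by parts, $f$ itself paired with polynomial factors coming from derivatives of $\phi_{C(t)}$) are integrable against the Gaussian density uniformly in $t\in[0,1]$, killing all boundary terms at infinity. Everything else is bookkeeping with the symmetric-derivative convention \eqref{defdersym}, and in particular the factor $\frac12$ on the right-hand side of \eqref{intf} is an honest consequence of \eqref{derbas2} rather than a normalization choice.
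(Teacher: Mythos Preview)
Your argument is correct and follows essentially the same route as the paper: both compute $\partial_t\,\mathbb E[f(Z(t))]$ by passing through the density $\phi_{C(t)}$, invoking the identities \eqref{derbas1}--\eqref{derbas2}, and integrating by parts twice to transfer the Hessian onto $f$, then integrating in $t$ along the linear path of covariances. The only point the paper adds that you omit is the degenerate case where $C^X$ or $C^Y$ is merely non-negative definite (so $\phi_{C(t)}$ need not exist as a density); there one perturbs by $\epsilon\,\mathbb I$ and lets $\epsilon\to 0$.
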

\begin{proof}
When $Z$ is a $n$-dimensional centered Gaussian random vector, then
$\mathbb{E}\left[f(Z)\right]$ depends only on the covariance matrix $C$ of the
vector $Z$. Fix a $C^2$ function $f$ and define the function
$F:\overline{\mathcal C}\to \mathbb R$ as
\begin{equation}
F\left(C\right)\coloneqq\mathbb E\left[f\left(Z\right)\right]\,. \label{dacitare2}
\end{equation}
With the help of formulas \eqref{derbas1}, \eqref{derbas2}, when
$C\in \mathcal C$ we can compute
\begin{eqnarray}
\frac{\partial F\left(C\right)}{\partial C_{i,j}} &=& \int_{\mathbb
	R^n}\de x\,\frac{\partial \phi_{C}\left(x\right)}{\partial C_{i,j}}f\left(x\right)
=\int_{\mathbb R^n}\de x\,\frac{\partial^2
	\phi_{C}\left(x\right)}{\partial x_i\partial
	x_j}f\left(x\right)\\
&=&\int_{\mathbb R^n}\de x\,
\phi_{C}\left(x\right)\frac{\partial^2f\left(x\right)}{\partial x_i\partial
	x_j}=\mathbb E\left[\left(\nabla^2
f\right)_{i,j}\left(Z\right)\right],\label{der1}
\end{eqnarray}
and
\begin{eqnarray}
\frac{\partial F\left(C\right)}{\partial C_{i,i}}&=& \int_{\mathbb
	R^n}\de x\,\frac{\partial \phi_{C}\left(x\right)}{\partial C_{i,i}}f\left(x\right)
=\frac 12\int_{\mathbb R^n}\de x\,\frac{\partial^2
	\phi_{C}\left(x\right)}{\partial x_i^2}f\left(x\right)\nonumber \\
&=&\frac 12\int_{\mathbb R^n}\de x\,
\phi_{C}\left(x\right)\frac{\partial^2f\left(x\right)}{\partial x_i^2}=\frac 12\, \mathbb
E\left[\left(\nabla^2f\right)_{i,i}\left(Z\right)\right]\,.\label{der2}
\end{eqnarray}

Given a $C^1$ parametric curve $\left\{C(t)\right\}_{t\in[0,1]}$ on $\mathcal C$
such that $C(0)=C^X$ and $C(1)=C^Y$, then we have
\begin{equation}
\mathbb E\left[f\left(Y\right)\right]-\mathbb E\left[f\left(X\right)\right]=\frac 12\int_0^1 \de t\,\mathbb
E\left[\mathrm{Tr}\left(\frac{\de C\left(t\right)}{\de t}\nabla^2f\left(Z\left(t\right)\right)\right)\right]\,,
\end{equation}
where $Z\left(t\right)$ is a centered Gaussian random vector having covariance
$C\left(t\right)$.  The special case when the curve linearly interpolates
between $C^X$ and $C^Y$ gives \eqref{intf} with $Z(t)$  given by \eqref{tutututu}.
If one or both the matrices $C^X$ and $C^Y$ are not strictly positive definite,
it is possible to add to the matrices $\epsilon \mathbb I$, do the same computation as above and finally take the limit $\epsilon \to 0$.
\end{proof}

The above formula is the core of the interpolation method. It is
very useful to establish inequalities between the two expected
values on the left hand side of \eqref{intf}.



\smallskip

The \emph{Guerra-Toninelli interpolation method} is a simple but powerful
technique developed in the study of mean field spin glasses (see \cite{G, GT} and
references therein), which is based on an abstract theorem
about Gaussian random variables. It corresponds to the interpolation
method \ref{intmeth} with the special choice
of the function
\begin{equation}\label{interpol}
f(x)=\log \sum_{i=1}^nw_i \ex^{x_i}\,,
\end{equation}
where $w_i\in \mathbb R^+$ are some fixed positive weight.

In particular, Guerra and Toninelli obtained and used the following result
(this is Theorem 2 in \cite{G}) to prove the existence of the thermodynamic limit of the Sherrington-Kirkpatrick model. The same idea and the same Theorem, (Theorem \ref{loro} below) was used later on in \cite{CONT} to deduce the existence of the thermodynamic limit for
a GREM model \cite{DERR} with a finite number of levels.

\begin{theorem}\label{loro}
Let $X,Y$ two centered Gaussian random vectors and the function $f$ given by \eqref{interpol}.
If
\begin{align}
& C^X_{i,i}=C^Y_{i,i}\,, \qquad \forall \, i\,,\label{condd1}\\
& C^X_{i,j}\geq  C^Y_{i,j}\,, \qquad \forall \, i\neq j\,,\label{condd2}
\end{align}
then we have
\begin{equation}\label{ladis}
\mathbb
E\left[f\left(Y\right)\right]\geq\mathbb E\left[f\left(X\right)\right]\,.
\end{equation}
\end{theorem}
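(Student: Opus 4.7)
The plan is to apply Lemma \ref{intmeth} directly with this specific choice of $f$, observe that the structure of the Hessian $\nabla^2 f$ is particularly nice, and then use the two hypotheses \eqref{condd1} and \eqref{condd2} to show that the integrand in \eqref{intf} has a definite sign.

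First I would check the hypotheses of Lemma \ref{intmeth} for this $f$. The function $f(x)=\log\sum_i w_i e^{x_i}$ is $C^\infty$, and since $\max_i x_i \leq f(x) \leq \log(\sum_i w_i) + \max_i x_i$, it grows at most linearly, so it satisfies the moderate growth condition (with $\lambda=1$, say). Next I would compute the Hessian. Setting
\begin{equation*}
p_i(x) \coloneqq \frac{w_i e^{x_i}}{\sum_{k} w_k e^{x_k}},
\end{equation*}
so that $(p_1,\dots,p_n)$ is a probability vector, a direct computation gives $\partial_i f(x) = p_i(x)$ and
\begin{equation*}
(\nabla^2 f)_{i,j}(x) = p_i(x)\,\delta_{i,j} - p_i(x) p_j(x).
\end{equation*}
The key structural observation is that the off-diagonal entries of $\nabla^2 f$ are non-positive, while the diagonal entries are non-negative.

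The main step is then to plug this into the interpolation formula \eqref{intf}. Letting $D \coloneqq C^Y - C^X$, which is symmetric with $D_{i,i}=0$ by \eqref{condd1} and $D_{i,j}\leq 0$ for $i\neq j$ by \eqref{condd2}, we compute
\begin{equation*}
\mathrm{Tr}\bigl(D\,\nabla^2 f(Z(t))\bigr) = \sum_i D_{i,i}\bigl(p_i - p_i^2\bigr) - \sum_{i\neq j} D_{i,j}\,p_i(Z(t))\,p_j(Z(t)).
\end{equation*}
The first sum vanishes by \eqref{condd1}, and the second sum is non-negative pointwise, since each term $-D_{i,j} p_i p_j$ is a product of non-negative quantities. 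Taking expectation preserves this sign (the integrand is bounded since $p_i \in [0,1]$ and $D$ is constant), and so does integrating in $t$ over $[0,1]$. Formula \eqref{intf} therefore yields $\mathbb E[f(Y)] - \mathbb E[f(X)] \geq 0$, which is \eqref{ladis}.

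I do not expect any serious obstacle: once the Hessian is computed, the sign structure of $\nabla^2 f$ is perfectly matched by the two hypotheses on $D = C^Y - C^X$, with \eqref{condd1} killing the (possibly wrongly-signed) diagonal contribution and \eqref{condd2} taking care of the off-diagonal terms. The only mild care point is justifying the exchange of expectation and differentiation that underlies \eqref{intf} for this particular $f$, but this is ensured by its moderate growth together with the boundedness of $\nabla^2 f$ (indeed $|(\nabla^2 f)_{i,j}| \leq 1$ uniformly in $x$).
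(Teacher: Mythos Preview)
Your proof is correct and follows essentially the same route as the paper: compute the Hessian of $f$ in terms of the probability weights $p_i=\mu_i$, plug into the interpolation formula \eqref{intf}, and use \eqref{condd1} to kill the diagonal contribution and \eqref{condd2} to make each off-diagonal term nonnegative. The only differences are cosmetic (your notation $p_i$ versus the paper's $\mu_i$, and your slightly more explicit verification of the moderate-growth hypothesis).
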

We show the proof of Theorem \ref{loro} that is based on the interpolation formula
\eqref{intf}.
\begin{proof}[Proof of Theorem \ref{loro}]
Let us call, for any $i=1,\ldots,n$
\begin{equation}\label{muu}
\mu_i\left(x\right)\coloneqq\frac{w_i\ex^{x_i}}{\sum_{j=1}^nw_j\ex^{x_j}}\,.
\end{equation}
By a direct computation, when $f$ is \eqref{interpol}, we have
	\begin{align}
	\frac{\partial^2f\left(x\right)}{\partial x_i^2}=\mu_i\left(x\right)-\mu_i^2\left(x\right)\,,\label{dergt1}\\
	\frac{\partial^2f\left(x\right)}{\partial x_i\partial
		x_j}=-\mu_i\left(x\right)\mu_j\left(x\right)\,.\label{dergt2}
	\end{align}
By the formulas \eqref{dergt1}, \eqref{dergt2} and conditions \eqref{condd1}, \eqref{condd2}, we have that
	\begin{equation}
	(C^Y-C^X)_{i,j}\left(\nabla^2f\right)_{i,j}\left(x\right)\geq 0\,, \ \ \
	\ \ \forall\, x\in \mathbb R^d, \;\;\; \forall \,i,j\label{cond}\,
	\end{equation}
	and the result follows by \eqref{intf}.
\end{proof}

\subsection{Covariances and metrics}

We start showing a simple but useful Lemma
\begin{lemma}
We have that the $n\times n$ symmetric matrix $C$ belongs to $\mathcal C$ if and only if there exist $n$ vectors $a^{(i)}\in \mathbb R^n$ such that
\begin{equation}\label{defcv}
C_{i,j}=\big(a^{(i)},a^{(j)}\big)\,.
\end{equation}
\end{lemma}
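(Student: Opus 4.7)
The plan is to recognize this as the standard characterization of positive (semi-)definite symmetric matrices as Gram matrices, which I would prove by two symmetric implications using the spectral theorem.

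For the $(\Leftarrow)$ direction, I assume $C_{i,j}=(a^{(i)},a^{(j)})$ for some $a^{(i)}\in \mathbb R^n$. For any $v\in\mathbb R^n$ I compute
\begin{equation*}
(v,Cv)=\sum_{i,j}v_iv_j\big(a^{(i)},a^{(j)}\big)=\Big\|\sum_i v_ia^{(i)}\Big\|^2\geq 0,
\end{equation*}
which shows $C\in\upbar{\mathcal C}$; strict positivity (hence $C\in\mathcal C$) is equivalent to the vanishing condition $\sum_iv_ia^{(i)}=0\Rightarrow v=0$, i.e.\ to linear independence of the $a^{(i)}$.

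For the $(\Rightarrow)$ direction, I use the spectral theorem: since $C$ is symmetric, write $C=U\Lambda U^T$ with $U$ orthogonal and $\Lambda=\mathrm{diag}(\lambda_1,\ldots,\lambda_n)$. Since $C\in\mathcal C$, all $\lambda_k>0$, so $\Lambda^{1/2}$ is well defined. Setting $A\coloneqq U\Lambda^{1/2}$ we obtain $C=AA^T$; defining $a^{(i)}$ as the $i$-th row of $A$ yields $C_{i,j}=\sum_k A_{i,k}A_{j,k}=(a^{(i)},a^{(j)})$ as desired. The factor $\Lambda^{1/2}$ being invertible automatically makes the $a^{(i)}$ linearly independent, consistent with the $(\Leftarrow)$ observation.

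The content is entirely elementary and there is no serious obstacle; the only real bookkeeping point is the distinction between $\mathcal C$ and $\upbar{\mathcal C}$. In the $(\Leftarrow)$ direction one must decide whether to require the vectors $a^{(i)}$ to be linearly independent (giving $\mathcal C$) or allow them to be arbitrary (giving $\upbar{\mathcal C}$); both are handled by the same Gram-matrix computation above, and the spectral argument for $(\Rightarrow)$ adapts verbatim to the degenerate case by allowing zero eigenvalues.
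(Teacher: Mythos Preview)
Your proof is correct and follows essentially the same route as the paper: the $(\Leftarrow)$ direction is the identical Gram-matrix computation $\sum_{i,j}x_iC_{i,j}x_j=\big|\sum_i x_ia^{(i)}\big|^2\geq 0$, and the $(\Rightarrow)$ direction in the paper simply asserts the existence of a factorization $C=AA^T$ and reads off the $a^{(i)}$ as the rows of $A$, whereas you supply that factorization explicitly via the spectral decomposition $A=U\Lambda^{1/2}$. Your remark on the $\mathcal C$ versus $\upbar{\mathcal C}$ distinction is also apt, since the paper's own $(\Leftarrow)$ argument likewise only yields $\upbar{\mathcal C}$ without an independence hypothesis.
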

\begin{proof}
If a matrix can be written like \eqref{defcv} then we have
$$
\sum_{i,j}x_iC_{i,j}x_j=|v|^2\geq 0\,,\qquad \forall \,x_1, \dots ,x_n\,,
$$
where $v=\sum_{i=1}^nx_ia^{(i)}$.
Conversely, if $C \in \mathcal C$ then it can be written as $C=AA^T$ where $A$ is a $n\times n$ matrix and $A^T$ denotes its transposed matrix.
Let us introduce the vectors $a^{(i)}=(a^{(i)}_1,\dots,a^{(i)}_n)\in \mathbb R^n$ with $i=1,\dots ,n$ defined by $a^{(i)}_j\coloneqq A_{i,j}$. In terms of these vectors we have \eqref{defcv}.
\end{proof}

A finite metric space with $n$ points is called \emph{Euclidean} if there exists a collection of $n$ points on $\mathbb R^k$ having the same distances. Of course we can always fix $k=n$. Not every metric space can be realized in this way. The simplest example is the minimal path metrics on the vertices of the graph in Figure \ref{esempio} where the edges have all length 1.

\begin{figure}[htbp!]
	\centering
	\includegraphics{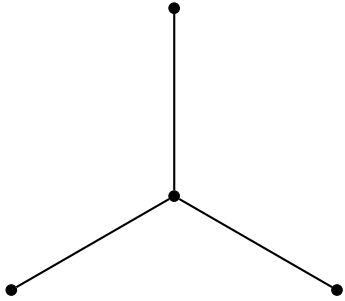}
	\caption{The simplest non Euclidean metric space. }\label{esempio}
\end{figure}

Given a centered Gaussian random vector $X$ there is naturally associated the metric $d_X$ that is the $L^2$ distance between the random variables
\begin{equation}\label{ladist}
d_X(i,j)\coloneqq \sqrt{\mathbb E\left[\left(X_i-X_j\right)^2\right]}=\sqrt{C^X_{i,i}+C^X_{jj}-2C^X_{i,j}}.
\end{equation}

We have the following result (see for example \cite{IEEE,S})
\begin{lemma}
A finite metric space $\left(\left\{1,\dots,n\right\}, d\right)$ is Euclidean if and only if there exists a mean zero Gaussian random vector $X=(X_1,\dots ,X_n)$
such that $d=d_X$.
\end{lemma}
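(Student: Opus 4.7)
The plan is to use the preceding lemma characterizing covariance matrices as Gram matrices, and the elementary identity that relates the squared $L^2$ distance to the covariances, namely $d_X(i,j)^2 = C^X_{i,i} + C^X_{j,j} - 2 C^X_{i,j}$, which for a Gram matrix $C^X_{i,j} = (a^{(i)}, a^{(j)})$ simplifies to $|a^{(i)} - a^{(j)}|^2$. Both implications then reduce to the same bookkeeping between Euclidean point configurations and Gram matrices.

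For the forward implication, suppose $X$ is a mean zero Gaussian vector with $d = d_X$. By the previous lemma we can write $C^X_{i,j} = (a^{(i)}, a^{(j)})$ for some $a^{(i)} \in \mathbb{R}^n$. Then
\begin{equation}
d(i,j)^2 = d_X(i,j)^2 = |a^{(i)}|^2 + |a^{(j)}|^2 - 2(a^{(i)}, a^{(j)}) = |a^{(i)} - a^{(j)}|^2,
\end{equation}
so the collection $\{a^{(i)}\}_{i=1}^n \subset \mathbb{R}^n$ realizes $d$ as a Euclidean metric.

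For the reverse implication, if $(\{1,\dots,n\}, d)$ is Euclidean then by definition there exist points $a^{(i)} \in \mathbb{R}^n$ with $d(i,j) = |a^{(i)} - a^{(j)}|$. Define the symmetric matrix $C_{i,j} \coloneqq (a^{(i)}, a^{(j)})$. By the previous lemma $C \in \upbar{\mathcal{C}}$, so it is a valid covariance matrix (if $C$ fails to be strictly positive definite, a degenerate Gaussian vector still exists, e.g.\ as a linear image of standard Gaussians using $C = A A^T$). Let $X$ be a mean zero Gaussian vector with covariance $C$. Then
\begin{equation}
d_X(i,j)^2 = C_{i,i} + C_{j,j} - 2 C_{i,j} = |a^{(i)} - a^{(j)}|^2 = d(i,j)^2,
\end{equation}
so $d = d_X$.

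I do not expect any significant obstacle here: both directions follow immediately once one combines the Gram-matrix characterization of $\upbar{\mathcal{C}}$ with the polarization identity for squared distances. The only minor subtlety is that in the reverse direction $C$ could be only positive semi-definite rather than strictly positive definite, but this is harmless since a centered Gaussian vector with any non-negative definite covariance exists (as already noted at the end of the proof of Lemma \ref{intmeth}).
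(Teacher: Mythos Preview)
Your proof is correct and takes essentially the same route as the paper: both directions rest on writing the covariance as a Gram matrix $C_{i,j}=(a^{(i)},a^{(j)})$ and then observing that $d_X(i,j)=|a^{(i)}-a^{(j)}|$. The paper makes the Gaussian construction explicit as $X=AZ$ with $Z$ a vector of i.i.d.\ standard normals, which is exactly the realization you allude to parenthetically.
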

\begin{proof}
Consider $d$ and Euclidean distance and let $a^{(i)}$, $i=1,\dots ,n$ be some points on $\mathbb R^n$ that realize such a distance. Let $A$ be an $n\times n$ matrix
defined  by $A_{i,j}\coloneqq a^{(i)}_j$. Let $Z=(Z_1,\dots,Z_n)$ be a vector of i.i.d. standard Gaussian random variables and consider the Gaussian vector $X=AZ$ whose
covariance $C^X=AA^T$ coincides with the right hand side of \eqref{defcv}. Using \eqref{ladist} we have
\begin{equation}\label{si}
d_X\left(i,j\right)=\big|a^{(i)}-a^{(j)}\big|=d\left(i,j\right)\,.
\end{equation}
Conversely given $X$ a Gaussian mean zero vector with covariance $C^X$ and let $A$ an $n\times n$ matrix such that $C^X=AA^T$. Define $n$ vectors in $\mathbb R^n$
by $a^{(i)}_j\coloneqq A_{i,j}$; by \eqref{ladist} we have that $d_X$ is determined by the first equality in \eqref{si} and is therefore Euclidean.
\end{proof}

The metric structure $d_X$ contains less information than the covariance $C^X$ and there are random vectors having different covariances
but the same metric structure. This type of invariance is best understood in terms of the vectors in $\mathbb R^d$ using the Lemmas in Appendix
that characterizes invariance by rotations and translations. In particular we can completely
characterize the centered Gaussian random variables that share the same
metric structure.
\begin{lemma}\label{belloteo}
	Given $X$ and $Y$ two $n$-dimensional centered Gaussian random
	vectors, we have that $d^X=d^Y$ if and only if there exists a centered Gaussian random variable $W$ such that
	the random vector $X_i+ W$, $i=1,\dots ,n$ has the same distribution
	of $Y$.
\end{lemma}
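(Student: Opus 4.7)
The ``if'' direction should be immediate: granting a joint Gaussian law of $(X, W)$ (not necessarily making $W$ independent of $X$) such that $(X_1 + W, \dots, X_n + W)$ has the distribution of $Y$, then $Y_i - Y_j$ has the same law as $(X_i + W) - (X_j + W) = X_i - X_j$, and taking second moments yields $d^Y(i,j) = d^X(i,j)$.

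For the nontrivial direction my plan is to reformulate the hypothesis in Euclidean terms using the preceding lemma. I would realize $X_i = (a^{(i)}, Z)$ and $Y_i = (b^{(i)}, \tilde Z)$ for standard Gaussian vectors $Z, \tilde Z$ on $\mathbb{R}^n$ and vectors $a^{(i)}, b^{(i)} \in \mathbb{R}^n$ with $(a^{(i)}, a^{(j)}) = C^X_{i,j}$ and $(b^{(i)}, b^{(j)}) = C^Y_{i,j}$. The assumption $d^X = d^Y$ then becomes $|a^{(i)} - a^{(j)}| = |b^{(i)} - b^{(j)}|$ for every $i,j$, i.e., the two $n$-point configurations are congruent as subsets of $\mathbb{R}^n$. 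At this point I would invoke the Appendix lemmas on invariance of finite Euclidean configurations under rigid motions to produce an orthogonal matrix $O$ and a vector $t \in \mathbb{R}^n$ satisfying $b^{(i)} = O a^{(i)} + t$ for every $i$.

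Given such a pair $(O,t)$ the construction of $W$ is mechanical. Using a single standard Gaussian $Z$ on $\mathbb{R}^n$, I would set $\tilde X_i \coloneqq (a^{(i)}, O^T Z)$ and $W \coloneqq (t, Z)$. Since $O^T Z$ is again standard Gaussian, $\tilde X$ has the distribution of $X$; $W$ is a centered scalar Gaussian by construction; and the identity
\[
\tilde X_i + W = (a^{(i)}, O^T Z) + (t, Z) = (O a^{(i)} + t, Z) = (b^{(i)}, Z)
\]
shows that $(\tilde X_i + W)_{i=1}^n$ has the law of $Y$. A last cosmetic step would transport the joint law of $(\tilde X, W)$ back to $X$: since $\tilde X$ and $X$ share the same marginal, one can build a $W'$ jointly Gaussian with $X$ such that $(X, W')$ has the same joint law as $(\tilde X, W)$, and this $W'$ is the desired random variable.

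I expect the rigid-motion step to be the only substantive ingredient; once the Appendix lemmas deliver $(O,t)$, the rest is a short algebraic identity. It is worth emphasizing that $W$ is in general \emph{not} independent of $X$: insisting on independence would force $C^Y - C^X$ to be a scalar multiple of the all-ones matrix, which is strictly stronger than $d^X = d^Y$. The orthogonal part $O$ of the rigid motion is precisely what encodes the freedom to correlate $W$ with $X$.
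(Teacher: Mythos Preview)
Your proof is correct and follows essentially the same route as the paper: represent $X$ and $Y$ through vectors in $\mathbb{R}^n$, invoke the Appendix rigid-motion lemma to get $b^{(i)}=Oa^{(i)}+t$, and then use that $O^TZ$ is again standard Gaussian so that the rotated realization has the law of $X$ while the translation $t$ furnishes the scalar $W=(t,Z)$. Your explicit ``cosmetic'' transport of the joint law back to the original $X$, and your remark that $W$ need not be independent of $X$, are points the paper leaves implicit.
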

\begin{proof}
	If $Y$ has the same distribution of $X+W$ then
	$$
	d_Y(i,j)=\sqrt{\mathbb{E}\left[\left(Y_i-Y_j\right)^2\right]}=\sqrt{\mathbb
	{E}\left[\left(X_i+W-X_j-W\right)^2\right]}=d_X(i,j)\,.
	$$
	Conversely, suppose that $d_X=d_Y$. We have that there exist two
	matrices $A^X$ and $A^Y$ such that $A^XZ$ has the same distribution
	of $X$ and $A^YZ$ has the same distribution of $Y$, where $Z$ is an $n$ vector of i.i.d. standard
	Gaussian random variables. We define two collections $v^{(i)}, w^{(i)}$ $i=1,\dots n$ of vectors in $\mathbb R^n$
	defined by $v^{(i)}_j\coloneqq A^X_{i,j}$ and $w^{(i)}_j\coloneqq A^Y_{i,j}$. Since $d_X=d_Y$ we have
	\begin{equation}
	\big|v^{(i)}-v^{(j)}\big|=\big|w^{(i)}-w^{(j)}\big|\,, \qquad  \forall\, i,j\,,
	\label{pimpa2}
	\end{equation}
	and by Lemma \ref{pimpalemma} there
	exist $O\in O(n)$ and a vector $b\in \mathbb R^n$ such that $w^{(i)}=Ov^{(i)}+b$, $i=1,\dots ,n$.
	In terms of the corresponding matrices this means that $A^Y=A^XO^T+B$,
	where the matrix $B$ is defined as $B_{i,j}\coloneqq b_j$.
	We obtain therefore
	\begin{equation}\label{bonci}
	Y=A^XO^TZ+BZ\,.
	\end{equation}
	The random vector $A^XO^TZ$ is a centered Gaussian random vector with covariance $A^XO^TO(A^X)^T=C^X$ so that it has
	the same law of $X$. The random vector $BZ$ has all the components equal and setting $W=\sum_{j=1}^nb_jZ_j$
	we finish the proof.
\end{proof}
A direct consequence of the above result is the following. Define the function $F:\upbar{\mathcal C}\to \mathbb R$ by
\begin{equation}\label{moi}
F(C)\coloneqq \mathbb E\left[\log \sum_{i=1}^nw_i\ex^{X_i}\right]\,,
\end{equation}
where $X$ is a centered Gaussian random vector with covariance $C$.
\begin{lemma}
Given $C^X, C^Y \in \upbar{\mathcal C}$ such that $d_X=d_Y$, then $F(C^X)=F(C^Y)$.
\end{lemma}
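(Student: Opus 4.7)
The plan is to reduce the equality $F(C^X) = F(C^Y)$ directly to Lemma \ref{belloteo}, exploiting the very specific shift-invariance of the function $f(x) = \log \sum_i w_i \ex^{x_i}$.

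First I would invoke Lemma \ref{belloteo}: since $d_X = d_Y$, there exists a centered Gaussian random variable $W$ (possibly on an enlarged probability space, independent of $X$ only after a natural extension, but the key point is the equality in distribution) such that the vector $(X_1 + W, \ldots, X_n + W)$ has the same law as $Y$. In particular,
\begin{equation}
F(C^Y) = \mathbb E\left[\log \sum_{i=1}^n w_i \ex^{X_i + W}\right].
\end{equation}

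Next I would use the crucial algebraic property of $f$: adding the same constant $W$ to every coordinate factors out of the sum and pulls out of the logarithm,
\begin{equation}
\log \sum_{i=1}^n w_i \ex^{X_i + W} = W + \log \sum_{i=1}^n w_i \ex^{X_i}.
\end{equation}
Taking expectations and using $\mathbb E[W] = 0$ (since $W$ is centered Gaussian) yields $F(C^Y) = F(C^X)$.

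There is no real obstacle here; the only subtlety worth flagging is that Lemma \ref{belloteo} guarantees equality in distribution of $Y$ with $X + W\mathbf{1}$, not an almost sure representation, but equality in distribution is enough to equate expectations of a (measurable, integrable) functional. Integrability of $\log \sum_i w_i \ex^{X_i}$ under the Gaussian law follows from the elementary bound $\max_i X_i \le \log \sum_i w_i \ex^{X_i} \le \log(\sum_i w_i) + \max_i X_i$, both tails of which are integrable since the $X_i$ are Gaussian; this justifies the use of expectations throughout.
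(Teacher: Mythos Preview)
Your proof is correct and follows essentially the same route as the paper: invoke Lemma \ref{belloteo} to write $Y \stackrel{d}{=} X + W\mathbf{1}$, pull $W$ out of the logarithm, and use $\mathbb E[W]=0$. Your added remarks on equality in distribution versus almost sure representation and on integrability are more careful than the paper's terse version, but the argument is the same.
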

\begin{proof}
Since $d_X=d_Y$ by Lemma \ref{belloteo} we have that $Y=X+W$ and therefore
\begin{eqnarray*}
&F(C^Y) =\mathbb E\left[\log \sum_{i=1}^nw_i\ex^{Y_i}\right]
 =\mathbb E\left[\log \sum_{i=1}^nw_i\ex^{X_i+W}\right]\\
& =\mathbb E\left[W+\log \sum_{i=1}^nw_i\ex^{X_i}\right]=F(C^X)\,,
\end{eqnarray*}	
where the last equality follows by the fact that $W$ is centered.
\end{proof}

This Lemma simply says that we can define the right hand side of \eqref{moi}
as $\tilde F(d)$ since the function depends just on the metric structure of
the random variables and not on their correlations.

We expect therefore to have a version of Theorem \ref{loro}
with conditions written just in terms of the metrics. This is done in the next section.

\subsection{A generalized condition}

We show how to generalize Theorem \ref{loro} proving that \eqref{ladis} can be deduced under weaker hypotheses concerning just the metric structures. The same inequality has been obtained in \cite{C} with a tricky computation. Here we show that this fact follows from a general argument and that it is somehow the best possible bound.

\begin{theorem}\label{nostro}
Let $X,Y$ two centered Gaussian random vectors and the function $f$ given by \eqref{interpol}.
	If
	\begin{equation}
	d_Y\left(i,j\right)\geq d_X\left(i,j\right) \qquad \forall\,
	i,j\,, \label{condGGT}
	\end{equation}
	then
\begin{equation}
\mathbb E\left[f\left(Y\right)\right]\geq\mathbb E\left[f\left(X\right)\right].
\end{equation}
\end{theorem}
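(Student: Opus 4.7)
The plan is to revisit the interpolation formula \eqref{intf} directly with $f$ as in \eqref{interpol} and to show, by pure algebra, that the trace in the integrand admits a clean rewriting purely in terms of the metrics $d_X, d_Y$. The intuition is simple: we already know from the previous Lemma that $F(C)=\mathbb{E}[\log\sum_i w_i\ex^{X_i}]$ depends only on $d_X$, so we should expect that the quantity driving the monotonicity in the interpolation formula depends only on $d_X, d_Y$ as well.

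First, I apply Lemma \ref{intmeth} to obtain
\begin{equation*}
\mathbb{E}\left[f(Y)\right]-\mathbb{E}\left[f(X)\right]=\frac{1}{2}\int_0^1 \de t\,\mathbb{E}\Big[\mathrm{Tr}\Big((C^Y-C^X)\nabla^2 f(Z(t))\Big)\Big],
\end{equation*}
with $Z(t)$ as in \eqref{tutututu}. Then I plug in the explicit Hessian entries \eqref{dergt1} and \eqref{dergt2}. Writing $D=C^Y-C^X$ for brevity, the integrand becomes
\begin{equation*}
\sum_i D_{i,i}\,(\mu_i-\mu_i^2)-\sum_{i\neq j}D_{i,j}\,\mu_i\mu_j.
\end{equation*}

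The key algebraic step is to use $\sum_j\mu_j=1$, which gives $\mu_i-\mu_i^2=\mu_i\sum_{j\neq i}\mu_j$. Substituting and grouping contributions by the unordered pair $\{i,j\}$, and exploiting the symmetry $D_{i,j}=D_{j,i}$, the expression collapses to
\begin{equation*}
\mathrm{Tr}\big((C^Y-C^X)\nabla^2 f(x)\big)=\sum_{1\leq i<j\leq n}\big(D_{i,i}+D_{j,j}-2D_{i,j}\big)\mu_i(x)\mu_j(x)=\sum_{i<j}\big[d_Y(i,j)^2-d_X(i,j)^2\big]\mu_i(x)\mu_j(x),
\end{equation*}
where in the last equality I used the definition \eqref{ladist} of the $L^2$-metric.

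Under the hypothesis \eqref{condGGT}, each bracket $d_Y(i,j)^2-d_X(i,j)^2$ is nonnegative, and since $\mu_i(x)>0$ for every $x\in\mathbb{R}^n$, the integrand is pointwise nonnegative for every realization of $Z(t)$. Taking the expectation and integrating over $t\in[0,1]$ yields the desired inequality. The whole proof is essentially trivial once the algebraic rewriting of the trace has been performed; I expect no real obstacle beyond keeping track of the symmetrization $(i,j)\leftrightarrow(j,i)$ correctly. A short remark at the end can underline that the identity is sharp: since the integrand factorizes as a nonnegative combination of the $d_Y^2-d_X^2$, the condition \eqref{condGGT} cannot be further weakened to a pointwise condition on the metrics alone, which is the sense in which the bound is ``the best possible''.
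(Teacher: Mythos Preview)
Your proof is correct and follows essentially the same route as the paper: both apply the interpolation formula, compute the Hessian of $f$, and reduce to showing $\mathrm{Tr}\big((C^Y-C^X)\nabla^2 f\big)\ge 0$ via the combination $D_{ii}+D_{jj}-2D_{ij}=d_Y(i,j)^2-d_X(i,j)^2$. The only cosmetic difference is that the paper isolates the algebraic step as a separate Lemma (Lemma~\ref{lemme}) stated as an if-and-only-if on the simplex $\mathcal I^n$, which makes the sharpness claim rigorous, whereas you derive the explicit identity $\mathrm{Tr}(D\nabla^2 f)=\sum_{i<j}(d_Y^2-d_X^2)\mu_i\mu_j$ inline and leave sharpness as a closing remark.
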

Note that if conditions \eqref{condd1} and \eqref{condd2} are satisfied then \eqref{condGGT} holds, but it is easy to construct examples for which \eqref{condGGT} holds
but \eqref{condd1}, \eqref{condd2} are violated.

Observe that for any $x$ we have that $\mu\left(x\right)=\left(\mu_1\left(x\right), \dots ,\mu_n\left(x\right)\right)\in
\mathcal I^n$ (recall definition \eqref{muu}) where
$$
\mathcal I^n=\bigg\{\mu=\left(\mu_1,\dots,\mu_n\right) :\,0\leq \mu_i \leq 1,\,\,
\sum_{i=1}^n\mu_i=1\bigg\}\,.
$$
Namely, $\mathcal I^n\subset \mathbb R^n$ is a $(n-1)$-dimensional
simplex with extremal elements  $\mu^{(1)},\dots ,\mu^{(n)}$, where $
\mu^{(l)}_i=\delta_{li}$.

We start with a preliminary Lemma

\begin{lemma}
	\label{lemme} Consider a symmetric matrix $D$ and the function
	$G:\mathcal I^n\to \mathbb R$ defined as
	\begin{equation}\label{gi}
	G\left(\mu\right)\coloneqq\sum_{i=1}^n\mu_iD_{ii}-\sum_{i=1}^n\sum_{j=1}^n\mu_i\mu_jD_{ij}.
	\end{equation}
	We have that
	\begin{equation}
	\inf_{\mu\in \mathcal I^n}G\left(\mu\right)\geq 0 \label{max}
	\end{equation}
	if and only if
	\begin{equation}
	D_{ii}+D_{jj}-2D_{ij}\geq 0 \qquad \forall\,
	i,j\in\left\{1,\dots,n\right\}. \label{condfin}
	\end{equation}
\end{lemma}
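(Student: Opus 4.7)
The plan is to prove the two implications separately, with the easier direction (necessity) handled by testing $G$ on specific two-point convex combinations, and the other direction (sufficiency) following from a clean algebraic identity that rewrites $G(\mu)$ entirely in terms of the squared-distance expressions $D_{ii}+D_{jj}-2D_{ij}$.

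For the necessity of \eqref{condfin}, I would fix indices $i,j\in\{1,\dots,n\}$ and choose $\mu\in\mathcal{I}^n$ with $\mu_i=\mu_j=\tfrac12$ and all other coordinates zero (if $i=j$ the statement is trivial since $D_{ii}+D_{ii}-2D_{ii}=0$). A direct computation then gives
\begin{equation*}
G(\mu)=\tfrac12 D_{ii}+\tfrac12 D_{jj}-\bigl(\tfrac14 D_{ii}+\tfrac14 D_{jj}+\tfrac12 D_{ij}\bigr)=\tfrac14\bigl(D_{ii}+D_{jj}-2D_{ij}\bigr),
\end{equation*}
so the hypothesis $\inf_{\mu\in\mathcal{I}^n}G(\mu)\geq 0$ forces $D_{ii}+D_{jj}-2D_{ij}\geq 0$ for all $i,j$.

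For the sufficiency, the main step is to derive the identity
\begin{equation*}
G(\mu)=\tfrac12\sum_{i,j}\mu_i\mu_j\bigl(D_{ii}+D_{jj}-2D_{ij}\bigr),
\qquad \mu\in\mathcal{I}^n.
\end{equation*}
The derivation uses only the constraint $\sum_i\mu_i=1$: splitting the symmetric quadratic form and applying this constraint to the diagonal pieces, one sees $\tfrac12\sum_{i,j}\mu_i\mu_j(D_{ii}+D_{jj})=\sum_i\mu_i D_{ii}$, so subtracting $\sum_{i,j}\mu_i\mu_j D_{ij}$ produces exactly $G(\mu)$. Once this identity is in hand, assumption \eqref{condfin} makes every summand non-negative (since $\mu_i,\mu_j\geq 0$), yielding $G(\mu)\geq 0$.

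I do not foresee a real obstacle; the only subtle point is remembering that the equality $\tfrac12\sum_{i,j}\mu_i\mu_j(D_{ii}+D_{jj})=\sum_i\mu_i D_{ii}$ crucially uses $\mu\in\mathcal{I}^n$, and would fail for general $\mu\in\mathbb{R}^n$. This identity is in fact the natural explanation of why the theorem only depends on the $L^2$ metric structure $D_{ii}+D_{jj}-2D_{ij}$, matching the geometric flavor already emphasized in the paper.
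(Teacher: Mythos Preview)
Your proof is correct and follows essentially the same approach as the paper. For necessity both arguments test $G$ at the two-point measure $\mu_i=\mu_j=\tfrac12$; for sufficiency the paper bounds $-\sum_{i,j}\mu_i\mu_j D_{ij}\geq -\tfrac12\sum_{i,j}\mu_i\mu_j(D_{ii}+D_{jj})$ and then observes the resulting expression vanishes, which is exactly your identity $G(\mu)=\tfrac12\sum_{i,j}\mu_i\mu_j(D_{ii}+D_{jj}-2D_{ij})$ read as an inequality rather than an equality---your formulation as an exact identity is a mild but pleasant sharpening.
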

\begin{proof}
	If condition \eqref{condfin} holds, then
	$$
	G(\mu)\geq\sum_{i=1}^n\mu_iD_{ii}-\frac 12
	\sum_{i=1}^n\sum_{j=1}^n\mu_i\mu_j\left(D_{ii}+D_{jj}\right)=0.
	$$
	To obtain the last identity we used the fact that $\mu \in \mathcal
	I^n$.
	Conversely, suppose inequality \eqref{max} to hold. Choose $\mu$ such that $\mu_l=\mu_m=\frac{1}{2}$ for some $l\neq m$ and $0$ otherwise; then \eqref{gi} becomes
	\begin{equation}
	\frac{1}{4}\left(D_{ll}+D_{mm}\right)-\frac{1}{2}D_{lm}\geq 0
	\end{equation}
	where we used the symmetry of $D$. Consider all the couples $l,m\in\left\{1,\ldots,n\right\}$ to get the result.
\end{proof}

\begin{proof}[Proof of Theorem \ref{nostro}]
By formula \eqref{intf} we deduce the results once we show that
\begin{equation}\label{miacond}
\inf_{x\in \mathbb R^n}\bigg\{\mathrm{Tr}\Big(D\left(\nabla^2f\right)_{i,j}(x)\Big)
\bigg\}\geq 0\,,
\end{equation}
where we called
\begin{equation}\label{cucina}
D\coloneqq C^Y-C^X\,.
\end{equation}
Using \eqref{dergt1} and \eqref{dergt2} we obtain that
the expression to be minimized in \eqref{miacond} is
\begin{equation}
\sum_{i=1}^nD_{i,i}\mu_i\left(x\right)-\sum_{i=1}^n\sum_{j=1}^nD_{i,j}\mu_i\left(x\right)\mu_j\left(x\right).
\label{asou}
\end{equation}
We have therefore that the infimum in \eqref{miacond} coincides with
$\inf_{\mu\in \mathcal I^n}G(\mu)$ and the result follows by Lemma \ref{lemme} since \eqref{condfin} with the matrix $D$ defined by \eqref{cucina} coincides
with \eqref{condGGT}.
\end{proof}

\section{Examples}
In this section we discuss two examples, obtaining the existence of the thermodynamic limit for the quenched free energy of two models. The first one is the  Sherrington-Kirkpatrick model. The existence of the thermodynamic limit for this model was obtained, by the interpolation method, in the breakthrough paper \cite{GT}. This was done using the result \ref{loro}. We review this result as a warm-up to fix ideas and the basic constructions. We use however Theorem \ref{nostro} and discuss the result just in terms of the metrics. Then we discuss a class of Generalized Random Energy Models \cite{DERR} for which in general conditions \eqref{condd1}, \eqref{condd2} fail while condition \eqref{condGGT} hold. We refer to \cite{R}, \cite{BK1} and \cite{BK2} for the beautiful mathematics involved in the limit of such kind of models.

\subsection{The  Sherrington-Kirkpatrick model}
The Sherrington-Kirkpatrick model is a mean field spin glass model \cite{G}, \cite{P}, \cite{SK}, \cite{Tb}. Spins configurations are $\sigma\in \{-1,1\}^N$ and the energy of the system is given by
\begin{equation}\label{ESK}
H_N\left(\sigma\right)\coloneqq-\frac{1}{\sqrt N}\sum_{i,j=1}^NJ_{i,j}\sigma\left(i\right)\sigma\left(j\right)\,,
\end{equation}
where $J_{i,j}$ are i.i.d. standard Gaussian random variables. Small variants of the model consider different sums in \eqref{ESK} but all the variants are equivalent modulo simple transformations.  The spins are associated to the vertices of a complete graph and the interaction between each pair of spins is determined by the variables $J$'s.
The partition function is defined as
\begin{equation}
Z_N\left(\beta\right)\coloneqq \sum_{\{\sigma\}}\ex^{-\beta H_N(\sigma)}\,,
\end{equation}
where the parameter $\beta$ is the inverse temperature and the quenched free energy per site is defined by
\begin{equation}
F_N\left(\beta\right)\coloneqq -\frac{1}{\beta N}\mathbb{E}\left[\log Z_N\left(\beta\right)\right]\coloneqq \frac{1}{\beta N}\alpha_N\left(\beta\right)\,,
\end{equation}
where the last equality defines the symbol $\alpha_N\left(\beta\right)$.
The variables $\left(-\beta H_N\left(\sigma\right)\right)_{\sigma\in \left\{-1,1\right\}^N}$ are a centered Gaussian random vector with covariance
\begin{equation}
\beta^2\mathbb E\left[H_N\left(\sigma\right)H_N\left(\sigma'\right)\right]=\frac{\beta^2}{N} \sum_{i,j}\sigma\left(i\right)\sigma\left(j\right)\sigma'\left(i\right)\sigma'\left(j\right)=N\beta^2q_N^2\left(\sigma,\sigma'\right)\,,
\end{equation}
where
\begin{equation}
q_N\left(\sigma,\sigma'\right)\coloneqq \frac{1}{N}\sum_{i=1}^N\sigma\left(i\right)\sigma'\left(i\right)\,,
\end{equation}
is the \emph{overlap} between the configurations $\sigma$ and $\sigma'$.
The corresponding distance according to \eqref{ladist} is given by
\begin{equation}\label{ladistq}
d_N\left(\sigma,\sigma'\right)=
\beta\sqrt{8N\Big[d^H_N\left(1-d^H_N\right)\Big]}\,,
\end{equation}
where
\begin{equation}\label{distsk}
d^H_N\left(\sigma,\sigma'\right)\coloneqq \frac 1N\sum_{i=1}^N\mathbb I\Big(\sigma\left(i\right)\neq \sigma'\left(i\right)\Big)
\end{equation}
is the \textit{Hamming distance}. Notice that we have of course $d_N\left(\sigma,\sigma\right)=0$
but we have also $d_N\left(\sigma,-\sigma\right)=0$ since $H_N\left(\sigma\right)=H_N\left(-\sigma\right)$. The fact that the right hand side of \eqref{distsk} is a distance (indeed a pseudo distance) is not trivial but follows directly since it is obtained by \eqref{ladist}.

\smallskip

Let us split the system into two subsystems $S_1, S_2$ with respectively $N_1$ and $N_2$ vertices with $N_1+N_2=N$. We erase the interaction between spins that belong to different subsystems. We define the restricted Hamiltonians of the subsystems as
\begin{equation}\label{ESKsub}
H_{N_k}(\sigma)\coloneqq -\frac{1}{\sqrt N_k}\sum_{i,j\in S_k}J_{i,j}\sigma\left(i\right)\sigma\left(j\right)\,, \qquad k=1,2\,,
\end{equation}
where we remark that the sum is restricted to the indices belonging to the subsystems labeled $k=1,2$. Here and hereafter we continue to use the symbol $\sigma$ both for the full configuration as well as for the configuration restricted to a subsystem.
When a configuration appears in an expression that is labeled by a subsystem then we mean the configuration restricted to the subsystem. For example $d^H_{N_k}\left(\sigma,\sigma'\right)$ and $d_{N_k}\left(\sigma,\sigma'\right)$ are respectively the Hamming distance \eqref{distsk} and the distance \eqref{ladistq} when the configuration is restricted to the subsystem $k=1,2$. Note that with this notation we have the key relationship
\begin{equation}\label{tir}
d^H_N\left(\sigma,\sigma'\right)=\frac{N_1}{N}d^H_{N_1}\left(\sigma,\sigma'\right)+
\frac{N_2}{N}d^H_{N_2}\left(\sigma,\sigma'\right)\,.
\end{equation}
Another important relationship is
\begin{equation}\label{pesca}
\sum_{\{\sigma\}} \ex^{-\beta\left(H_{N_1}(\sigma)+H_{N_2}(\sigma)\right)}=Z_{N_1}(\beta)Z_{N_2}(\beta)\,.
\end{equation}
We apply Theorem \ref{nostro} with the vectors
$$
\left\{
\begin{array}{l}
Y=\big(-\beta H_N(\sigma)\big)_{\sigma\in \{-1,1\}^N}\,,\\
X=\left( -\beta H_{N_1}(\sigma)-\beta H_{N_2}(\sigma)\right)_{\sigma\in \{-1,1\}^N} \,.
\end{array}
\right.
$$
The condition \eqref{condGGT} becomes the super-Pythagorean relation
\begin{equation}
d_N\geq \sqrt{d_{N_1}^2+d_{N_2}^2}\,,
\end{equation}
that is equivalent to
\begin{equation}
\Big[d^H_N\left(1-d^H_N\right)\Big]\geq \frac{N_1}{N}\Big[d^H_{N_1}\left(1-d^H_{N_1}\right)\Big]+\frac{N_2}{N}\Big[d^H_{N_2}\left(1-d^H_{N_2}\right)\Big]\,.
\end{equation}
The above inequality is true
by \eqref{tir} and the concavity of the real function $x \to x\left(1-x\right)$.
By Theorem \ref{nostro} and \eqref{pesca} we deduce
\begin{equation}\label{keysk}
\alpha_N\left(\beta\right)\leq \alpha_{N_1}\left(\beta\right)+\alpha_{N_2}\left(\beta\right)\,,
\end{equation}
and by sub-additivity and the classic Fekete Lemma we deduce
that the limit of the  quenched free energy per site exists
\begin{equation}
\lim_{N\to\infty}F_N\left(\beta\right)=\lim_{N\to \infty}\frac{1}{\beta N}\alpha_N\left(\beta\right)=\inf_N\frac{1}{\beta N}\alpha_N\left(\beta\right)\,.
\end{equation}

\subsection{The Generalized Random Energy Model}

The \textit{Generalized Random Energy Model} (GREM) is a spin glass model introduced by Derrida \cite{DERR} to generalize the REM (\textit{Random Energy Model}) imposing pair correlations between energies. The model has a hierarchical structure, as any spin configuration correspond to a leaf of a given rooted tree.

We consider sequences of finite trees codified by finite strings of nonnegative integers.
Let $n\in\mathbb{N}$ and $\underline k=\left(k_1,\ldots,k_n\right)$ a vector of nonnegative integers and call $|\underline k| \coloneqq k_1+\ldots+k_n$. The tree $\mathcal{T}_{\underline k}$ is constructed as follows. The root (that is the unique node at level $0$) is connected to $2^{k_1}$ nodes to compose the first level. Each node of the first level is connected to $2^{k_2}$ nodes of the second level; we have therefore $2^{k_1+k_2}$ nodes on the second level and so on. The $n$-th level consists of $2^{k_1}2^{k_2}\ldots 2^{k_n}=2^{|\underline k|}$ leaves. If there exists a $1\leq j<n$ such that $k_j=0$, we mean that the nodes of the level $j$ coincide with those of the level $j-1$. A spin configuration $\sigma\in\left\{-1,1\right\}^{|\underline k|}$ is then attached to each leaf. The Hamiltonian is
\begin{equation}\label{energia-grem}
H_{\underline k}\left(\sigma\right)=-\sqrt{|\underline k|}\left(\ve_1^{(\sigma)}+\ldots+\ve_{n}^{(\sigma)}\right)\,,
\end{equation}
where $\ve_i^{(\sigma)}\sim\mathcal{N}\left(0,a_i\right)$ if $k_i>0$ and $\ve_i^{(\sigma)}=0$ if $k_i=0$. For any $i\in \mathbb N$ we have that the $a_i$'s are  positive numbers such that $\sum_{i=1}^{+\infty} a_i=1$.

The random variables $\ve$'s are attached to the edges of the tree. More precisely
attached to the edges that connect the level $i-1$ to the level $i$ there is a family of i.i.d. centered Gaussian random variables with variance $a_i$, one for each edge. When we write $\ve^{(\sigma)}_i$ we mean then the random variable associated to the unique edge that connects level $i-1$ to level $i$ and that belongs to the unique path from the leaf associated to $\sigma$ to the root. When $k_i=0$ there are no edges from level $i-1$ to level $i$ and therefore we set $\ve^{(\sigma)}_i=0$.
Then, $\left(H_{\underline k}(\sigma)\right)_{\sigma \in \left\{-1,1\right\}^{|\underline k| }}$ is a centered Gaussian random vector on the $|\underline k|$-dimensional hypercube $\left\{-1,1\right\}^{|\underline k|}$.

We call $l=l\left(\sigma,\tau\right)\in \{0,1, \dots n-1\}$ the level of the hierarchy at which the two paths from the leaves $\sigma$ and $\tau$ of $\mathcal{T}_{\underline k}$ to the root merge.
 The two configurations share the same energy variables $\ve^{(\sigma)}_i=\ve^{(\tau)}_i$ for any $i\le l$, while $\ve^{(\sigma)}_i\neq\ve^{(\tau)}_i$ whenever $i> l$. When $\ve^{(\sigma)}_i$ and $\ve^{(\tau)}_i$ are different, they are independent. Furthermore, $\ve^{(\sigma)}_i$ and $\ve^{(\tau)}_j$ are always independent if $j\neq i$. We define $\tilde a_i\coloneqq a_i$ when $k_i>0$ and $\tilde a_i\coloneqq0$ when $k_i=0$. We get
\begin{equation}\label{corsa}
\mathbb{E}\left[H_{\underline{k}}\left(\sigma\right)H_{\underline{k}}\left(\tau\right)\right]=\left|\underline{k}\right|\sum_{i=1}^{l}\tilde a_i\,,
\end{equation}
pointing out that the right hand side above is zero when $l=0$.
The corresponding metric according to \eqref{ladist}
is given by
\begin{equation}\label{malato}
d_{\underline{k}}\left(\sigma,\tau\right)=\sqrt{\mathbb{E}\left[\left(H_{\underline{k}}\left(\sigma\right)-H_{\underline{k}}\left(\tau\right)\right)^2\right]}=\sqrt{ 2 \left|\underline{k}\right|\sum_{i=l+1}^n\tilde a_i}\,.
\end{equation}
The term  inside the square root on the right hand side represents, up to a multiplicative factor, the minimal path length distance between the two leaves $\sigma$ and $\tau$
on the tree when each edge between level $i-1$ and $i$ has a length given by $a_i$.
Since the graph is a tree the path is unique and the metric \eqref{malato} is an ultrametric. We introduce, for notational convenience, the normalized distance
\begin{equation}\label{malato2}
{{s}_{\underline{k}}}\left(\sigma,\tau\right)\coloneqq \sqrt{2\sum_{i=l+1}^n\tilde a_i}\,,
\end{equation}
so that $d_{\underline{k}}\left(\sigma,\tau\right)=\sqrt{\left|{\underline{k}}\right|}s_{\underline{k}}\left(\sigma,\tau\right)$ for any pair of configurations $\sigma$ and $\tau$.

\begin{figure}[htbp!]
	\centering
	\hspace*{-0.45cm}
	\includegraphics{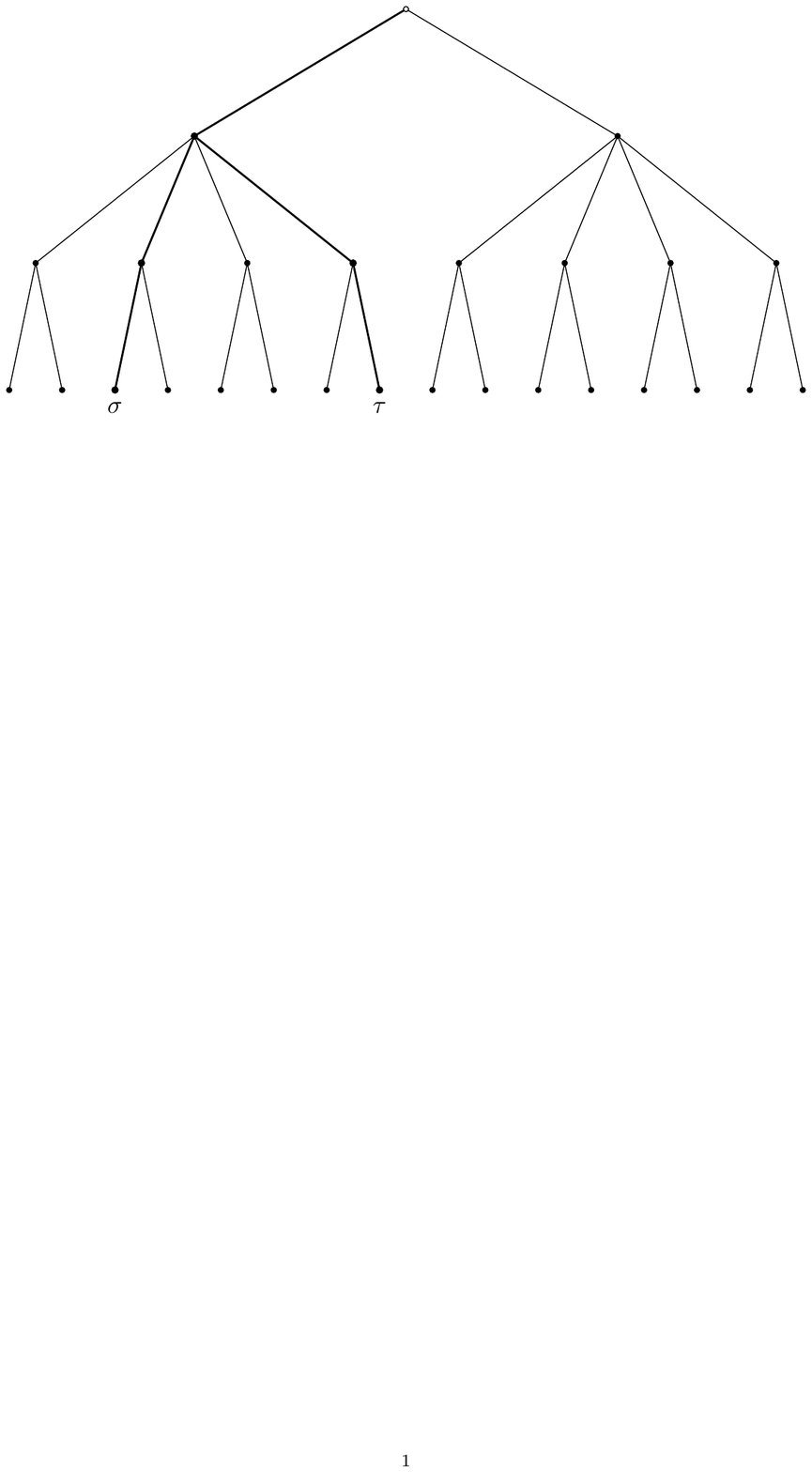}
	\caption{The paths $\sigma$ and $\tau$ are at distance $s_{{\underline{k}}}\left(\sigma,\tau\right)=2\left(a_2+a_3\right)$.}\label{f2}
\end{figure}

Both  the correlations \eqref{corsa} and the metric \eqref{malato} depends
on the vector $\underline k$ and on the assignment of configurations to leaves.
We will discuss soon this.

Like for the Sherrington-Kirkpatrick model, given an inverse temperature $\beta$, we introduce the disorder-dependent partition function
\begin{equation}\label{part-grem}
Z_{\underline k}\left(\beta\right)\coloneqq\sum_{\left\{\sigma\right\}}\ex^{-\beta H_{{\underline{k}}}\left(\sigma\right)}
\end{equation}
and the quenched average of the free energy per site
\begin{equation}\label{free}
F_{\underline k}\left(\beta\right)\coloneqq-\frac{1}{\beta |\underline k|}\,\mathbb{E}\left[\log Z_{\underline k}\left(\beta\right)\right]\,.
\end{equation}
We prove the existence of the thermodynamic limit of \eqref{free} under general assumptions when a parameter $N$ is diverging and the vector $\underline k=\underline{k}\left(N\right)$ is growing in such a way that also $n=n\left(N\right)$ diverges. Contucci et al. \cite{CONT} proved this fact when $n$ is constant. This was obtained applying the same strategy of the Guerra-Toninelli interpolation method \cite{GT}; in particular, they used the inequality in Theorem \ref{loro}. When $n$ is no longer bounded this inequality fails while the inequality in Theorem \ref{nostro} continues to work. We describe now more precisely the growing mechanism of the model and prove the existence of the thermodynamic limit.

\subsubsection{Growing and labeling}

We consider a sequence of growing trees labeled by a sequence of vectors $\underline k\left(N\right)$. For each  $N\in\mathbb{N}$ we have
the tree $\mathcal{T}_{\underline{k}\left(N\right)}$ defined by the following hypothesis and rules.
\newline
\begin{itemize}
\item[(H1)] Let $\left(\alpha_i\right)_{n=1}^{\infty}$ be a sequence of reals larger than 1 satisfying the constraint
\begin{equation}\label{alfa}
\sum_{i=1}^{\infty}\log\alpha_i=\log 2.
\end{equation}
The $\alpha_i$'s define the tree $\mathcal{T}_{\underline{k}\left(N\right)}$ through
\begin{equation}\label{intpart}
k_i\left(N\right)\coloneqq\left\lfloor \frac{N\log \alpha_i}{\log 2}\right\rfloor\,, \qquad i\in\mathbb{N}\,,
\end{equation}
where $\lfloor\, \cdot \,\rfloor$ denotes the integer part.
\\
\item[(H2)] The sequence $\left(a_i\right)_{i=1}^{\infty}$ corresponds to the lengths of the edges from the different levels and the variance of the associated random variables and satisfies the condition $\sum_{i=1}^{\infty}a_i=1$.
\\
\end{itemize}

The exact values of the sums of the series are not really important and could be substituted just by summability conditions. Formula \eqref{intpart} follows by the fact that we ask that the number of edges connecting a given node at level $i-1$ to nodes at level $i$ grows exponentially like $\alpha_i^N$.

Observe that by \eqref{alfa}, for any fixed $N>0$ in ${\underline{k}\left(N\right)}$ just a finite number of components is different from zero. We define \begin{equation}\label{maxx}
n\coloneqq n\left(N\right)\coloneqq \max\{i\,:\, k_i\left(N\right)>0\}
\end{equation} and the finite vector $\underline k\left(N\right)\coloneqq\left(k_1\left(N\right),\ldots,k_n\left(N\right)\right)$. Then, a spin configuration $\sigma\in\left\{-1,1\right\}^{\left|\underline k\left(N\right)\right|}$ is assigned to each leaf. The method is actually arbitrary; indeed, the free energy of the system is obtained summing over all the configurations, thus getting rid of any dependence on the underlying choice.

We assign a spin configuration to each leaf of the tree as follows.
At fixed $N$, we attach to every edge one or more labels of type $\left(m,s\right)$, where $s=\pm 1$  and $m\in\left\{1,\ldots, \left|\underline k(N)\right|\right\}$. Given a leaf there exists a unique path toward the root.
If this path crosses an edge having a label $\left(m,s\right)$ then the configuration $\sigma$ associated to the leave is such that $\sigma\left(m\right)=s$. We assign the label in such a way that every path meets all the labels $m=1, \dots, \left|\underline k\left(N\right)\right|$
and such that different leaves have associated different configurations.

We embed the tree on a plane so that the root is on the top and the paths from the leaves to the root are going upwards. Moreover all the edges connecting a given node with the nodes at the successive level are ordered from left to right. Each edge connecting the level $i-1$ to level $i$ has exactly $k_i\left(N\right)$ labels corresponding to the values $m=\sum_{j=1}^{i-1}k_j\left(N\right)+1, \sum_{j=1}^{i-1}k_j\left(N\right)+2, \dots , \sum_{j=1}^{i}k_j\left(N\right)$. The corresponding values of the parameter $s$ are fixed as follows.

Fix a node at level $i-1$. Number each edge connecting this node with a node at level $i$ with an integer number going from left to right from the value $0$ to $2^{k_i\left(N\right)}-1$. The leftmost will correspond to $0$ while the rightmost to $2^{k_i\left(N\right)}-1$. Do this for each node. Write these integers in binary code so that the leftmost edges are numbered with $k_i\left(N\right)$ zeros and the rightmost with $k_i(N)$ ones. In our setting, the $0$ corresponds to the $-$ sign and the $1$ to the $+$ sign. Then, we associate the lowest value of $m$ to the most significant digit and the highest value of $m$ to the less significant one. See Figure \ref{f3} for an example.

\begin{figure}[htbp!]
	\centering
	\hspace*{-0.4cm}
	\includegraphics{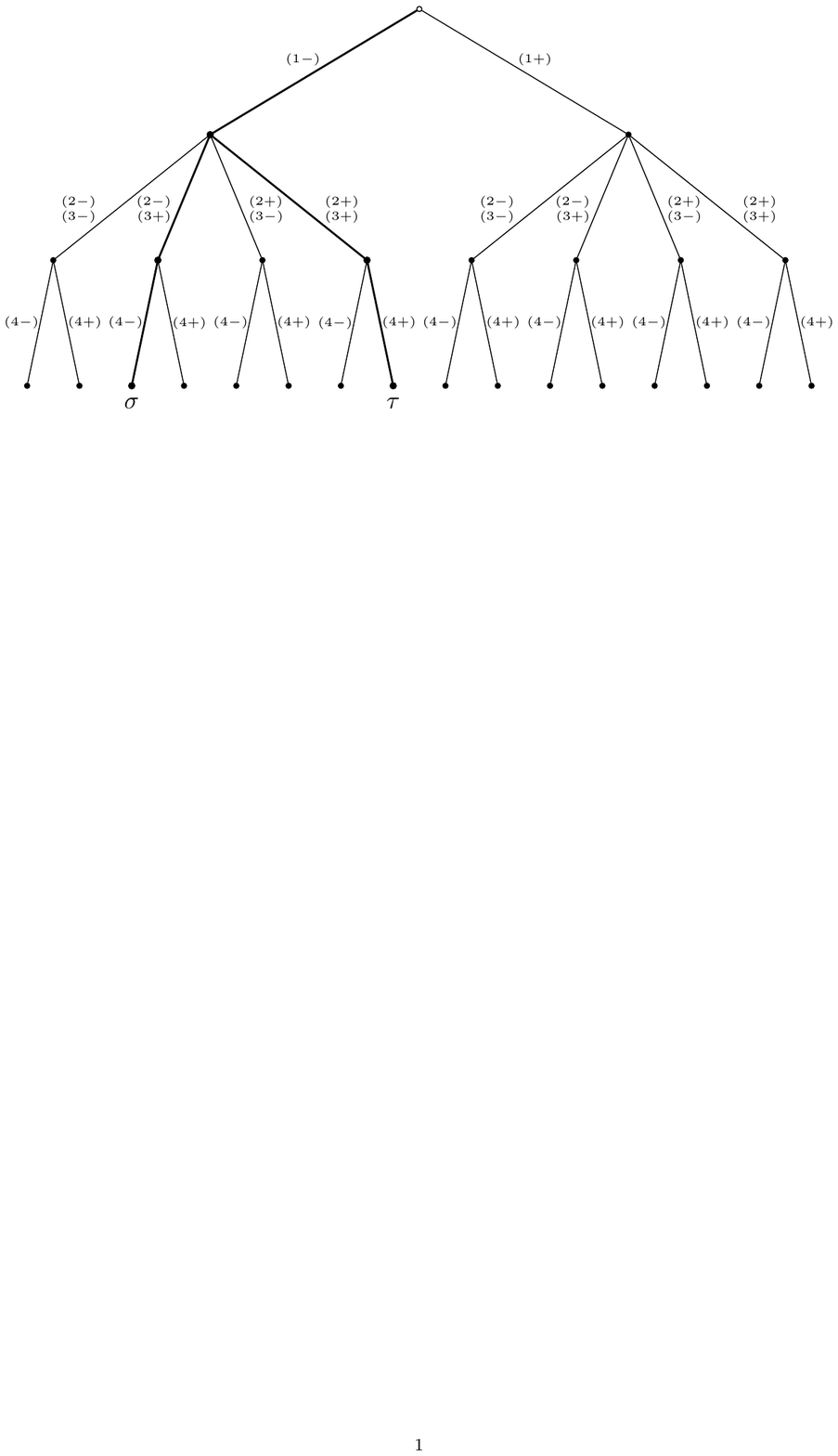}
	\caption{Example of assignation of the labels for $\underline{k}=\left(1,2,1\right)$. Paths $\sigma$ and $\tau$ have spin configurations $\sigma=\left(-1,-1,1,-1\right)$, $\tau=\left(-1,1,1,1\right)$}\label{f3}
\end{figure}

\subsubsection{Splitting the system}

Let $N>0$ and consider a pair of integers $N_1,N_2$ such that $N_1+N_2=N$. We already know how to construct the trees $\mathcal{T}_{\underline{k}\left(N\right)}$, $\mathcal{T}_{\underline{k}\left(N_1\right)}$ and $\mathcal{T}_{\underline{k}\left(N_2\right)}$. Their geometric structure is simply codified by the finite vectors $\underline k\left(N\right)$, $\underline k\left(N_1\right)$, $\underline k\left(N_2\right)$ and we recall that, by definition, we have
\begin{equation}
k_{i}\left(N_j\right)\coloneqq\left\lfloor \frac{N_j\log \alpha_i}{\log 2}\right\rfloor\,, \qquad j=1,2\,, \,\,\,\,\,i\in \mathbb N\,.
\end{equation}
Notice that
\begin{equation}\label{notice}
k_i\left(N_1\right)+k_i\left(N_2\right)\leq k_i\left(N\right)\leq k_i\left(N_1\right)+k_i\left(N_2\right)+1\,.
\end{equation}
We associate the labels to the edges and leaves of the full system $\mathcal{T}_{\underline{k}\left(N\right)}$ as in the previous section. The labels of the two subsystems $\mathcal{T}_{\underline{k}\left(N_1\right)}$ and $\mathcal{T}_{\underline{k}\left(N_2\right)}$ are instead attributed in a slightly different way in order to have different spins (different labels $m$) belonging to the two subsystems.

The labels $m$ attributed to the edges from level $i-1$ to level $i$ in the full system coincide with the set $\left\{\sum_{j=1}^{i-1}k_j\left(N\right)+1,\sum_{j=1}^{i-1}k_j\left(N\right)+2, \dots , \sum_{j=1}^{i}k_j\left(N\right) \right\}$. When we split the system into the two subsystems
we assign to the edges that connect each node in the level $i-1$ to the level $i$ of the subsystem $\mathcal{T}_{\underline{k}\left(N_1\right)}$ the labels
$\left\{\sum_{j=1}^{i-1}k_j\left(N\right)+1, \dots , \sum_{j=1}^{i-1}k_j\left(N\right)+k_i\left(N_1\right) \right\}$
while we assign to the edges that connect each node in the level $i-1$ to the level $i$ of the subsystem $\mathcal{T}_{\underline{k}\left(N_2\right)}$ the labels
$\left\{\sum_{j=1}^{i-1}k_j\left(N\right)+k_i\left(N_1\right)+1,\dots,\sum_{j=1}^{i-1}k_j\left(N\right)+k_i\left(N_1\right)+k_i\left(N_2\right)\right\}$. By \eqref{notice} this is well defined. Once split the labels $m$ into the two subsystems, the assignment of the label $s=\pm$ follow the same rule of the previous section. Since $k_i\left(N_1\right)+k_i\left(N_2\right)$ may be strictly less than $k_i\left(N\right)$, some of the labels $m$ (i.e. some spins) may disappear in the splitting.

\smallskip
We discuss now the behavior of the distances.
Consider two finite vectors $\underline k$ and $\underline k'$ such that
$k'_i\leq k_i$ for any $i$. We assign the labels to $\mathcal T_{\underline k}$ in the usual way while instead we assign the labels to $\mathcal T_{\underline k'}$ as follows.
We assign to the edges that connect each node in the level $i-1$ to the level $i$ of  $\mathcal{T}_{\underline k'}$ arbitrarily $k'_i$ of the $k_i$ labels in $\mathcal T_{\underline k}$. The assignment of the labels $s=\pm$ follows then the usual rule.
	
We call respectively $d_{\underline k}$ and $d_{\underline k'}$ the metrics defined by formula \eqref{malato} for the two trees $\mathcal T_{\underline k}$ and $\mathcal T_{\underline k'}$ and $s_{\underline k}$, $s_{\underline k'}$ the corresponding normalized distances (see \eqref{malato2}). As before given two spin configurations $\sigma,\tau\in \left\{-1,1\right\}^{\left|\underline k\right|}$ we call again $\sigma,\tau\in\left\{-1,1\right\}^{\left|\underline k'\right|}$ the same configurations but restricted just to the labels assigned to the edges in $\mathcal T_{\underline k'}$. We have the following.

\begin{lemma}\label{prp}
Consider two finite vectors $\underline k' \leq \underline k$ and the corresponding trees $\mathcal T_{\underline k}$ and $\mathcal T_{\underline k'}$ with configurations of spins associated to the leaves as above. Then we have
\begin{equation}\label{politica}
s_{\underline k'}\left(\sigma,\tau\right)\leq s_{\underline k}\left(\sigma,\tau\right)\,, \qquad \forall\, \sigma, \tau\,.
\end{equation}
\end{lemma}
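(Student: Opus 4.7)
The plan is to reduce the inequality to a simple comparison of the merging levels in the two trees, via the following chain of observations. Let $M_i\subseteq\{1,\ldots,|\underline k|\}$ denote the set of labels $m$ attached to every edge between level $i-1$ and level $i$ of $\mathcal T_{\underline k}$; by the labeling rule described before the statement this set depends only on the level, and $M_i=\{\sum_{j<i}k_j+1,\ldots,\sum_{j\leq i}k_j\}$ has cardinality $k_i$. Define $M'_i$ analogously for $\mathcal T_{\underline k'}$. Since the labels of $\mathcal T_{\underline k'}$ are chosen as $k'_i$ of the $k_i$ labels appearing at level $i$ of $\mathcal T_{\underline k}$, we have $M'_i\subseteq M_i$ with $|M'_i|=k'_i\leq k_i$.

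The next step is to give a clean combinatorial characterization of the merging level. At a fixed node at level $i-1$, the edge to level $i$ traversed by $\sigma$ is the unique one whose labels $(m,s)$ satisfy $s=\sigma(m)$ for every $m$ in the edge; the same holds for $\tau$. Therefore the two paths share the edge from level $i-1$ to level $i$ if and only if they already share the node at level $i-1$ and $\sigma(m)=\tau(m)$ for every $m\in M_i$. Iterating, the merging level $l(\sigma,\tau)$ in $\mathcal T_{\underline k}$ is the largest integer $l\in\{0,\ldots,n\}$ such that
\begin{equation*}
\sigma(m)=\tau(m) \qquad \forall\,m\in M_1\cup\cdots\cup M_l,
\end{equation*}
and the analogous statement characterises $l'(\sigma,\tau)$ in $\mathcal T_{\underline k'}$ with $M'_1,\ldots,M'_{l'}$.

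From this the key inequality $l'(\sigma,\tau)\geq l(\sigma,\tau)$ is immediate: if $\sigma$ and $\tau$ agree on $M_1\cup\cdots\cup M_l$, then a fortiori they agree on the subset $M'_1\cup\cdots\cup M'_l$, so in $\mathcal T_{\underline k'}$ the merging level is at least $l$. Combining this with the elementary facts $\tilde a'_i=a_i\,\id(k'_i>0)\leq a_i\,\id(k_i>0)=\tilde a_i$ and $n'\leq n$ (recall $n$ is the last index with $k_i>0$), we obtain
\begin{equation*}
\tfrac12 s^2_{\underline k'}(\sigma,\tau)=\sum_{i=l'+1}^{n'}\tilde a'_i\leq \sum_{i=l'+1}^{n}\tilde a_i\leq \sum_{i=l+1}^{n}\tilde a_i=\tfrac12 s^2_{\underline k}(\sigma,\tau),
\end{equation*}
and taking square roots proves \eqref{politica}. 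The only mildly delicate point is bookkeeping the edge cases ($\sigma=\tau$, or levels with $k'_i=0$ while $k_i>0$), but these are handled transparently by the convention that empty sums vanish and by the pointwise bound $\tilde a'_i\leq \tilde a_i$; the substance of the argument is the set-inclusion $M'_i\subseteq M_i$ that forces $l'\geq l$.
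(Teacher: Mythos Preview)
Your proof is correct and takes a genuinely different route from the paper's. The paper argues iteratively: it reduces $\underline k$ to $\underline k'$ one unit at a time, and for each such step interprets the passage from $\mathcal T_{\underline k}$ to $\mathcal T_{\underline k'}$ as an identification of paired edges (those differing only in the removed label $m^*$); since identifying edges can only shorten the geodetic path on the tree, $s$ cannot increase, and the general case follows by composition. You instead go straight to a combinatorial characterisation of the merging level---$l$ is the largest index for which $\sigma$ and $\tau$ agree on all labels in $M_1\cup\cdots\cup M_l$---and then observe that the inclusion $M'_i\subseteq M_i$ forces $l'\ge l$, from which the inequality on $s$ drops out termwise via $\tilde a'_i\le\tilde a_i$. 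Your argument is shorter and avoids the induction; the paper's has the advantage of giving a concrete geometric picture of how $\mathcal T_{\underline k'}$ sits inside $\mathcal T_{\underline k}$ via edge coalescence (cf.\ Figure~\ref{f4}), which some readers may find more transparent. Both rest on the same underlying fact---fewer labels at each level can only raise the merging level---but you isolate it cleanly as a set-inclusion, while the paper realises it as a quotient of trees.
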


\begin{proof}
Consider the tree $\mathcal T_{\underline k}$, two configurations $\sigma, \tau$ associated to two leaves and their corresponding geodetic path. Let us now consider a new finite vector $\underline k'$ obtained by $\underline k$ simply decreasing by one just a single component and preserving all the remaining ones, i.e. $k'_i=k_i-1$
and $k'_j=k_j$ for all $j\neq i$. Suppose that the label $m$ that is missing in $\mathcal T_{\underline k'}$ is $m^*$. The tree $\mathcal T_{\underline k'}$ with the corresponding labeling is obtained from $\mathcal T_{\underline k}$ and the original labeling simply as follows. All the edges connecting nodes at level $i-1$ to nodes at level $i$ in $\mathcal T_{\underline k}$ can be paired into pairs having exactly the same labels apart the one corresponding to $m^*$. The two paired edges will have labels respectively $\left(m^*,+\right)$ and $\left(m^*,-\right)$. If we identify each paired couple of edges, and consequently we identify too the subtrees starting from the identified nodes, we get a tree that coincides with $\mathcal T_{\underline k'}$ with exactly the same assignments of labels. In particular, the leaves associated to $\sigma$, $\tau$ in the new tree will be exactly the original ones after the identification. Finally the geodetic path too remains the same after the identification (see e.g. Figure \ref{f4}).

Since the identification procedure can only shorten this path we have the statement of the lemma when $\underline k'$ is obtained by $\underline k$ decreasing by one just one of its components. We finish the proof observing that any $\underline k'\leq \underline k$ can be obtained by $\underline k$ after a finite numbers of iterations of this type.
\end{proof}

\begin{figure}[htbp!]
	\centering
	\hspace*{-0.175cm}
	\includegraphics{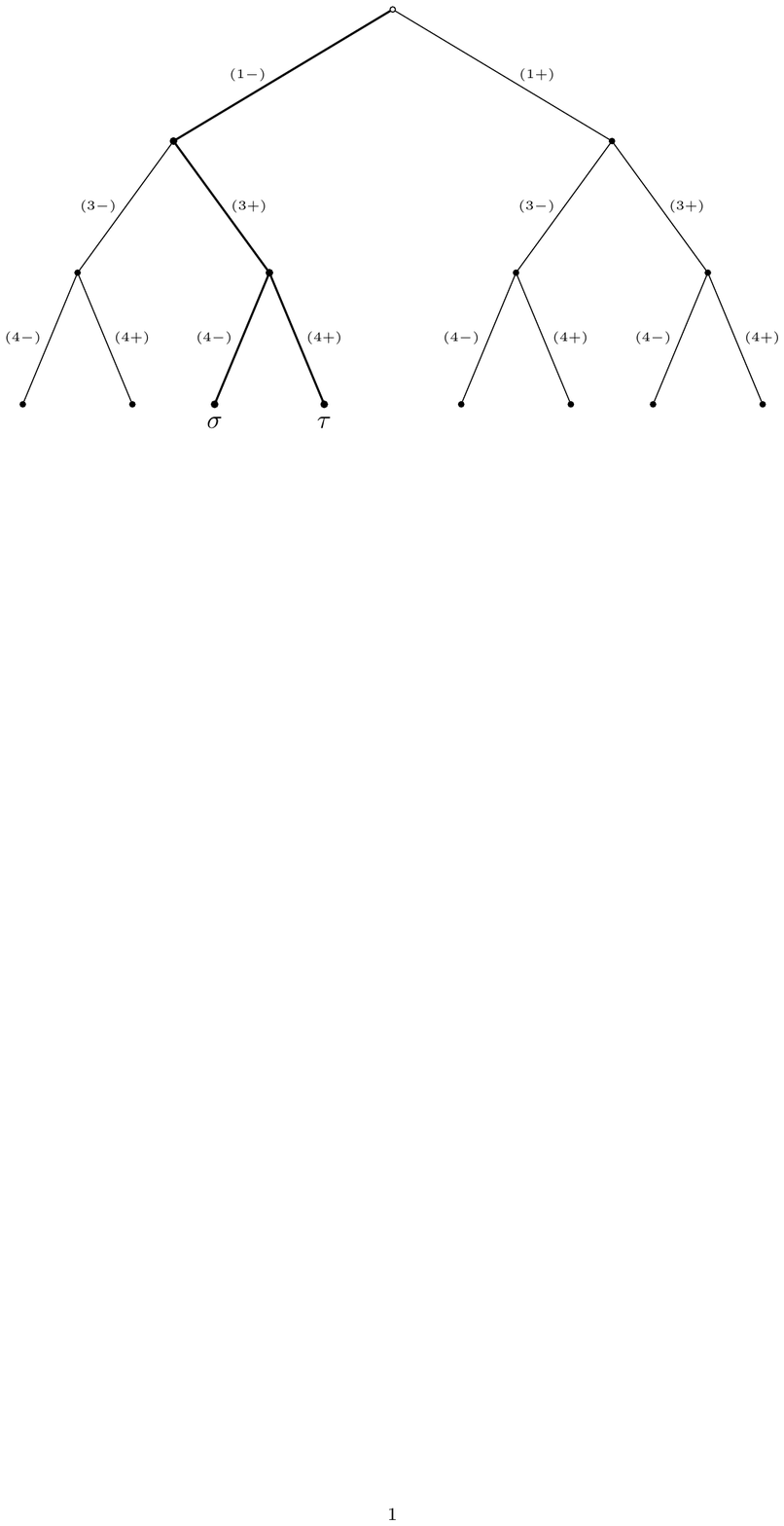}
	\caption{After the coalescence of the branches with $m^*=2$, the configurations $\sigma$ and $\tau$ in Figures \ref{f3} and \ref{f4} are at distance $s_{\underline{k}'}\left(\sigma,\tau\right)=\sqrt{2 a_3}<s_{\underline k}\left(\sigma,\tau\right)=\sqrt{2\left(a_2+a_3\right)}$.}\label{f4}
\end{figure}

\begin{remark}\label{remomarc}
Both $\mathcal T_{\underline k\left(N_i\right)}$, $i=1,2$ are obtained by $\mathcal T_{\underline k\left(N\right)}$ as in the hypothesis of lemma \ref{prp} and we have therefore
\begin{equation}\label{ineq}
s_{\underline k\left(N\right)}\left(\sigma,\tau\right)\ge \max\left\{s_{\underline k\left(N_1\right)}\left(\sigma,\tau\right), s_{\underline k\left(N_2\right)}\left(\sigma,\tau\right)\right\}\,, \qquad \forall \,\sigma, \tau\,,\,\,\,\,\,  i=1,2\,.
\end{equation}
Since by \eqref{notice} we have $\frac{|\underline k(N_1)|+|\underline k(N_2)|}{|\underline k(N)|}\le 1$ we deduce
\begin{equation}
s^2_{\underline k\left(N\right)}\left(\sigma,\tau\right)\ge \frac{\left|\underline k\left(N_1\right)\right|}{\left|\underline k\left(N\right)\right|}s^2_{\underline k\left(N_1\right)}\left(\sigma,\tau\right)+\frac{\left|\underline k\left(N_2\right)\right|}{\left|\underline k\left(N\right)\right|}s^2_{\underline k\left(N_2\right)}\left(\sigma,\tau\right)\,,
\end{equation}
that is equivalent to the super-Pythagorean condition
\begin{equation}
d_{\underline k\left(N\right)}\left(\sigma,\tau\right)\ge \sqrt{d^2_{\underline k\left(N_1\right)}\left(\sigma,\tau\right)+d^2_{\underline k\left(N_2\right)}\left(\sigma,\tau\right)}\,.
\end{equation}

\end{remark}

\subsubsection{Thermodynamic limit}
We define the energy of our sequence of GREM models as
$H_N\left(\sigma\right):= H_{\underline k\left(N\right)}\left(\sigma\right)$ (recall definition \eqref{energia-grem}) and the corresponding partition functions and density of free energies like in  \eqref{part-grem}, \eqref{free} more precisely $Z_N\left(\beta\right)=\sum_{\left\{\sigma\right\}} \ex^{-\beta H_N\left(\sigma\right)}$ and
\begin{equation}\label{dfeg}
F_N(\beta)=-\frac{1}{\beta \left|\underline{k}\left(N\right)\right|}\mathbb E\left[\log Z_N\left(\beta\right)\right]=:\frac{\alpha_N(\beta)}{\beta \left|\underline{k}\left(N\right)\right|}\,,
\end{equation}
where the last equality defines the symbol $\alpha_N(\beta)$.

We need a preliminary Lemma. Let us call $\gamma_i\coloneqq\frac{\log \alpha_i}{\log 2}>0$. Observe that by definition we have $\sum_{i=1}^{+\infty}\gamma_i=1$.

\begin{lemma}\label{senzanome}
	We have
	\begin{equation}
	\lim_{N\to \infty}\frac{\left|\underline k(N)\right|}{N}=1
	\end{equation}
\end{lemma}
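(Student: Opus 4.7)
The plan is to work directly from the definition $k_i(N) = \lfloor N\gamma_i \rfloor$, where I set $\gamma_i := \log\alpha_i / \log 2 > 0$. By hypothesis (H1), the $\gamma_i$ are positive and satisfy $\sum_{i=1}^\infty \gamma_i = 1$. For each fixed $N$ only finitely many $k_i(N)$ are nonzero, namely those with $\gamma_i \geq 1/N$, so $|\underline{k}(N)| = \sum_{i=1}^\infty \lfloor N\gamma_i \rfloor$ is a finite sum.

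The upper bound $|\underline{k}(N)|/N \leq 1$ is immediate from $\lfloor x \rfloor \leq x$ together with $\sum_i \gamma_i = 1$. For the matching lower bound I use $\lfloor N\gamma_i\rfloor \geq N\gamma_i - 1$ on the indices $i$ where $k_i(N) \geq 1$ and $\lfloor N\gamma_i\rfloor = 0$ otherwise, to obtain
\begin{equation*}
|\underline{k}(N)| \;\geq\; N \sum_{i:\,\gamma_i \geq 1/N} \gamma_i \;-\; \#\{i:\gamma_i \geq 1/N\}.
\end{equation*}
Dividing by $N$, the first term converges to $\sum_{i=1}^\infty \gamma_i = 1$ by monotone convergence (or by the dominated convergence argument for series), so the conclusion reduces to the claim that $\#\{i:\gamma_i \geq 1/N\} = o(N)$.

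This last claim is the only nontrivial step, and it follows from the summability of $(\gamma_i)$. Given $\varepsilon > 0$, pick $I$ so that $\sum_{i>I}\gamma_i < \varepsilon$. For each $i > I$ with $\gamma_i \geq 1/N$ one has a contribution of at least $1/N$ to this tail, so the number of such indices is at most $\varepsilon N$. Adding the at most $I$ possible indices with $i \leq I$ gives $\#\{i:\gamma_i \geq 1/N\} \leq I + \varepsilon N$, whence $\#\{i:\gamma_i \geq 1/N\}/N \leq I/N + \varepsilon$. Letting $N\to\infty$ and then $\varepsilon \downarrow 0$ yields $\#\{i:\gamma_i \geq 1/N\} = o(N)$, and combining this with the previous display shows $\liminf_N |\underline{k}(N)|/N \geq 1$. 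Together with the easy upper bound, this proves the lemma. The main (mild) obstacle is simply the tail control on the number of nonzero components; everything else is purely arithmetic.
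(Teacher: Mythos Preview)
Your proof is correct, but the paper takes a simpler route that sidesteps the tail-counting step entirely. Instead of summing over the random index set $\{i:\gamma_i\ge 1/N\}$, the paper fixes an arbitrary finite $k$ and uses the trivial lower bound
\[
\frac{|\underline k(N)|}{N}\;\ge\;\frac{1}{N}\sum_{i=1}^k\lfloor N\gamma_i\rfloor\;\ge\;\frac{1}{N}\sum_{i=1}^k(N\gamma_i-1)\;=\;\sum_{i=1}^k\gamma_i-\frac{k}{N},
\]
then lets $N\to\infty$ first and $k\to\infty$ afterwards. This two-limit trick replaces your estimate $\#\{i:\gamma_i\ge 1/N\}=o(N)$ by the obvious fact that $k/N\to 0$ for fixed $k$. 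What your approach buys is a bit more: the statement $\#\{i:\gamma_i\ge 1/N\}=o(N)$ is essentially the content of a separate lemma the paper proves later in its Appendix (needed to show the defect term $|\underline k(N)|-|\underline k(N_1)|-|\underline k(N_2)|$ is negligible), so you have in effect proved two results at once. For the bare limit statement, though, the paper's truncation argument is shorter.
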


\begin{proof}
For any finite $k$ we have
\begin{equation}
 \frac{\sum_{i=1}^{k}\left(N\gamma_i-1\right)}{N}\leq \frac{\left|\underline k\left(N\right)\right|}{N}\leq \frac{\sum_{i=1}^{+\infty}N\gamma_i}{N}\,.
\end{equation}
The right hand side of the above equation is $1$. The left hand side converges when $N\to \infty$ to $\sum_{i=1}^k\gamma_i$. Taking now the limit on $k\to\infty$
we deduce the statement of the Lemma.
\end{proof}

We can now prove the existence of the limit for quenched free energy per site of a GREM model with infinite levels.

\begin{theorem}
Under the hypothesis (H1) and (H2), there exists the limit when $N\to\infty$ of the density of free energy \eqref{dfeg} defined on $\mathcal{T}_{\underline{k}\left(N\right)}$, in the sense that there exists the following limit that coincides with an infimum
	\begin{equation}\label{therm}
	-\infty <\lim_{N\to\infty}-\frac{1}{\beta \left|\underline k\left(N\right)\right|}\,\mathbb{E}\left[\log Z_{N}\left(\beta\right)\right]=\inf_N\frac{\alpha_N(\beta)}{\beta N} <\infty.
	\end{equation}

\end{theorem}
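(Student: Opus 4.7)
The plan is to reduce the existence of the thermodynamic limit to the classical Fekete subadditivity lemma for the sequence $\{\alpha_N(\beta)\}$, exactly in the spirit of the Sherrington-Kirkpatrick argument already carried out above, but using Theorem \ref{nostro} in place of Theorem \ref{loro}. It is precisely here that the classical hypotheses of \cite{CONT} fail when $n(N)$ diverges, while the weaker metric condition \eqref{condGGT} continues to hold thanks to Remark \ref{remomarc}.

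Fix $N_1,N_2>0$ with $N_1+N_2=N$. The natural splitting is encoded by the two centered Gaussian vectors indexed by $\sigma\in\{-1,1\}^{|\underline{k}(N)|}$:
$$Y=\bigl(-\beta H_N(\sigma)\bigr)_\sigma\,,\qquad X=\bigl(-\beta H_{N_1}(\sigma)-\beta H_{N_2}(\sigma)\bigr)_\sigma\,,$$
where the Gaussian families driving $H_{N_1}$ and $H_{N_2}$ are independent of each other by construction of the two subtrees. Their $L^2$ metrics are $\beta d_{\underline{k}(N)}$ and $\beta\sqrt{d^2_{\underline{k}(N_1)}+d^2_{\underline{k}(N_2)}}$ respectively (the second by independence), so the super-Pythagorean relation at the end of Remark \ref{remomarc} is exactly the hypothesis $d_Y\ge d_X$ of Theorem \ref{nostro}. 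Applying it with weights $w_\sigma=1$ I would obtain
$$\mathbb{E}[\log Z_N(\beta)]\geq \mathbb{E}\Big[\log\sum_\sigma \ex^{-\beta H_{N_1}(\sigma)-\beta H_{N_2}(\sigma)}\Big]\,.$$

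The next step is to factor the right hand side. By the labelling scheme of the splitting, $H_{N_1}$ depends only on the $|\underline{k}(N_1)|$ spins whose labels belong to $\mathcal{T}_{\underline{k}(N_1)}$ and $H_{N_2}$ only on the disjoint $|\underline{k}(N_2)|$ spins belonging to $\mathcal{T}_{\underline{k}(N_2)}$; summing freely the remaining $|\underline{k}(N)|-|\underline{k}(N_1)|-|\underline{k}(N_2)|$ spins yields the factorization
$$\sum_\sigma \ex^{-\beta(H_{N_1}(\sigma)+H_{N_2}(\sigma))}=2^{|\underline{k}(N)|-|\underline{k}(N_1)|-|\underline{k}(N_2)|}\,Z_{N_1}(\beta)\,Z_{N_2}(\beta)\,.$$
Taking expectations and observing that the exponent in the prefactor is non-negative by \eqref{notice} gives the sub-additive estimate $\alpha_N(\beta)\le \alpha_{N_1}(\beta)+\alpha_{N_2}(\beta)$.

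To conclude, I would invoke Fekete's lemma on this sub-additive sequence: this yields $\lim_N \alpha_N(\beta)/N=\inf_N \alpha_N(\beta)/N$, and the limit is finite because Jensen's inequality $\mathbb{E}[\log Z_N]\le \log\mathbb{E}[Z_N]$ together with $\mathrm{Var}(H_N(\sigma))\le |\underline{k}(N)|\le N$ and $\sum_i \tilde a_i\le 1$ produces the uniform bound $\alpha_N(\beta)/N\ge -(\log 2+\beta^2/2)$. Dividing by $\beta$ and multiplying by $N/|\underline{k}(N)|\to 1$, which is legitimate by Lemma \ref{senzanome}, gives precisely \eqref{therm}. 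The only slightly delicate point I anticipate is the bookkeeping of the ``orphan'' spins present in $\mathcal{T}_{\underline{k}(N)}$ but in neither subtree, an artefact of the integer parts in \eqref{intpart}; happily \eqref{notice} forces their contribution to have the favourable sign, so no finer control on the $k_i(N)$ is needed.
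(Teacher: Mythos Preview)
Your proposal is correct and follows essentially the same route as the paper: apply Theorem \ref{nostro} via the super-Pythagorean inequality of Remark \ref{remomarc}, factor the split partition function picking up the nonnegative power of $2$ from \eqref{notice}, conclude subadditivity of $\alpha_N(\beta)$ and invoke Fekete together with Lemma \ref{senzanome}, and bound the infimum below by Jensen and $\sum_i a_i=1$. The only cosmetic difference is that you package the Jensen step as a uniform bound $\alpha_N(\beta)/N\ge -(\log 2+\beta^2/2)$, which is in fact cleaner than the paper's formulation.
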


\begin{proof}
We apply the interpolation method for the Gaussian random vectors
$H_{\underline{k}\left(N\right)}\left(\sigma\right)$ and $H_{\underline{k}\left(N_1\right)}\left(\sigma\right)+H_{\underline{k}\left(N_2\right)}\left(\sigma\right)$ that are both labeled by the configurations $\sigma\in \left\{-1,1\right\}^{\left|\underline k\left(N\right)\right|}$. The Gaussian random variables used to compute $H_{\underline{k}\left(N\right)}, H_{\underline{k}\left(N_1\right)}$ and $H_{\underline{k}\left(N_2\right)}$ are all independent among them. Note that since in the splitting some spins are lost then the second Gaussian random vector is degenerate.

We have the following identity
\begin{equation}
\sum_{\left\{\sigma\right\}}\ex^{-\beta H_{\underline{k}\left(N_1\right)}\left(\sigma\right)}\ex^{-\beta H_{\underline{k}\left(N_2\right)}\left(\sigma\right)}=Z_{N_1}\left(\beta\right)Z_{N_2}\left(\beta\right)2^{\left|\underline k\left(N\right)\right|-\left|\underline k\left(N_1\right)\right|-\left|\underline k\left(N_2\right)\right|}\,.
\end{equation}
The last term is due to the fact that some spins may be lost in the splitting.

By Remark \ref{remomarc}, we can apply Theorem \eqref{nostro} getting
\begin{align}\label{extra}
\alpha_N\left(\beta\right)\leq \alpha_{N_1}\left(\beta\right)+\alpha_{N_2}\left(\beta\right)-\Big(\left|\underline k\left(N\right)\right|-\left|\underline k\left(N_1\right)\right|-\left|\underline k\left(N_2\right)\right|\Big)\log 2\,.
\end{align}
Since the last term in the above inequality is nonnegative we obtain that
the sequence $\alpha_N(\beta)$ is subadditive. By Fekete's Lemma we deduce
that there exists the limit
\begin{equation}
\lim_{N\to\infty}\frac{\alpha_N(\beta)}{\beta N}=\inf_N\frac{\alpha_N(\beta)}{\beta N}\,.
\end{equation}
By Lemma \ref{senzanome} we have that
\begin{equation}
\lim_{N\to\infty}\frac{\alpha_N(\beta)}{\beta \left|\underline k\left(N\right)\right|}=
\lim_{N\to\infty}\frac{\alpha_N(\beta)}{\beta N}\,,
\end{equation}
and we get the main statement of the Theorem.

\smallskip

It remains just to prove that the limit
is strictly bigger than $-\infty$.

This follows by the summability of the variances $a_i$'s. Indeed, we prove that for any $N>0$, $-\beta F_{N}\left(\beta\right)$ is bounded from above. We have
\begin{eqnarray}
-\beta F_{N}\left(\beta\right)&=&\frac{1}{\left|\underline k\left(N\right)\right|}\,\mathbb{E}\left[\log Z_{N}\left(\beta\right)\right]=\frac{1}{\left|\underline k\left(N\right)\right|}\,\mathbb{E}\left[\log\sum_{\left\{\sigma\right\}}\ex^{\beta(\ve^{(\sigma)}_1+\ve^{(\sigma)}_2+\ldots+\ve^{(\sigma)}_{n\left(N\right)})}\right]
\nonumber\\
&\leq &\frac{1}{\left|\underline k\left(N\right)\right|}\,\log \sum_{\left\{\sigma\right\}} \mathbb{E}\left[\ex^{\beta(\ve^{(\sigma)}_1+\ve^{(\sigma)}_2+\ldots+\ve^{(\sigma)}_{n\left(N\right)})}
\right]\,,
\end{eqnarray}
where we used Jensen's inequality.
Since the $\ve_i^{(\sigma)}$ are independent, the expectation value in the last row is the product of generating functions:
\begin{equation}
\mathbb{E}\left[\ex^{\beta\ve_i^{(\sigma)}}\right]=\ex^{\frac{\beta}{2}a_i}\qquad \forall \,i\,,
\end{equation}
hence
\begin{align}
-\beta F_{N}\left(\beta\right)\leq \frac{1}{\left|\underline k\left(N\right)\right|}\,\log \sum_{\left\{\sigma\right\}}
\ex^{\frac{\beta}{2}\sum_{i=1}^{n\left(N\right)}a_i}
\le \frac{1}{\left|\underline k\left(N\right)\right|}\,\log \sum_{\left\{\sigma\right\}}\ex^{\frac{\beta}{2}}=\log 2 + \frac{\beta}{2\left|\underline k\left(N\right)\right|}\,,
\end{align}
where we used the fact that $\sum_{i=1}^{\infty}a_i=1$.
\end{proof}

Just as a remark we show in Lemma \ref{fine} in the Appendix that the third term
in the right hand side of \eqref{extra} is negligible when $N$ is large.
This fact is irrelevant for the proof but it is interesting in itself since for different models we could have a similar situation but with the wrong sign and a bound of this type could allow to apply the generalized subadditive lemmas in \cite{dbe}.

\section*{Acknowledgments}
We thank Adriano Barra, Francesco Guerra and Fabio Lucio Toninelli for several useful comments, suggestions and remarks.

\section*{Appendix}

In this Appendix we collect for the reader's convenience the proofs of some elementary Lemmas.

\begin{lemma}
	Let $v^{(1)},\dots, v^{(n)}$ and $w^{(1)},\dots, w^{(n)}$ be two
	collections of n vectors in $\mathbb R^n$. We have that
	\begin{equation}
	\big( v^{(i)},v^{(j)}\big) =\big( w^{(i)},w^{(j)}\big)\,,
	\qquad \forall \,i,j\, \label{cosug}
	\end{equation}
	if and only if there exists $O\in O\left(n\right)$ such that $w^{(i)}=Ov^{(i)}$ for any
	$i$. \label{coseni}
\end{lemma}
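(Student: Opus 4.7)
The plan is to prove the two implications separately. The \emph{if} direction is a one-line computation: if $w^{(i)} = O v^{(i)}$ with $O \in O(n)$, then $(w^{(i)}, w^{(j)}) = (O v^{(i)}, O v^{(j)}) = (v^{(i)}, O^T O v^{(j)}) = (v^{(i)}, v^{(j)})$, since $O^T O$ is the identity.

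For the \emph{only if} direction, the key observation is that the equality of Gram matrices in \eqref{cosug} implies, by bilinear expansion, the identity $\bigl|\sum_j c_j v^{(j)}\bigr|^2 = \sum_{j,k} c_j c_k (v^{(j)}, v^{(k)}) = \sum_{j,k} c_j c_k (w^{(j)}, w^{(k)}) = \bigl|\sum_j c_j w^{(j)}\bigr|^2$ for every tuple $(c_1, \dots, c_n) \in \mathbb{R}^n$. Hence the two collections share exactly the same linear dependencies: $\sum_j c_j v^{(j)} = 0$ if and only if $\sum_j c_j w^{(j)} = 0$. In particular the spans $\mathrm{span}(v^{(1)},\dots,v^{(n)})$ and $\mathrm{span}(w^{(1)},\dots,w^{(n)})$ have the same dimension $k$.

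This matched-dependency fact lets me define a well-defined linear map $O_0: \mathrm{span}(v^{(i)}) \to \mathrm{span}(w^{(i)})$ by the rule $\sum_j c_j v^{(j)} \mapsto \sum_j c_j w^{(j)}$: well-definedness is immediate from the matched-dependency statement, $O_0(v^{(j)}) = w^{(j)}$ by construction, and the same norm identity above shows that $O_0$ is an isometry (and preserves the scalar product by polarization). I would then extend $O_0$ to an orthogonal transformation of $\mathbb{R}^n$ by picking orthonormal bases of the orthogonal complements $\mathrm{span}(v^{(i)})^\perp$ and $\mathrm{span}(w^{(i)})^\perp$, both of dimension $n-k$, and mapping one onto the other; gluing these pieces yields an $O \in O(n)$ satisfying $O v^{(i)} = w^{(i)}$ for all $i$. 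The only delicate point in the whole argument is the well-definedness of $O_0$ in the degenerate case where the $v^{(i)}$'s are linearly dependent, and this is exactly what the matched-dependency identity resolves at essentially no cost.
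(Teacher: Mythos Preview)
Your proof is correct and follows essentially the same strategy as the paper's: use the Gram identity to show that the assignment $v^{(i)}\mapsto w^{(i)}$ extends to a well-defined linear isometry on the span, then extend to an element of $O(n)$ via orthonormal bases of the orthogonal complements. The only difference is organizational: the paper splits into the linearly independent case first and then handles the degenerate case by selecting a maximal independent subset, whereas your norm identity $\bigl|\sum_j c_j v^{(j)}\bigr|^2=\bigl|\sum_j c_j w^{(j)}\bigr|^2$ unifies the two cases and makes well-definedness immediate.
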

\begin{proof}
	
	If there exists $O\in O\left(n\right)$ such that $w^{(i)}=Ov^{(i)}$ for any
	$i$ the \eqref{cosug} is clearly true. Let us prove the converse statement.
	
	Suppose firstly that the vectors $v^{(i)}$ are linearly independent. Then given any $x\in
	\mathbb R^n$ we can write $x=\sum_ic^x_iv^{(i)}$ for some
	coefficients $c^x_i$. In this case we define the matrix $O$ by
	$$
	Ox\coloneqq \sum_{i=1}^nc^x_iOv^{(i)}=\sum_{i=1}^nc^x_iw^{(i)}\,.
	$$
	We need only to check that it is indeed orthogonal. This follows by
	\begin{eqnarray}
	\left(Ox,Oy\right)= \sum_{i,j}c^x_ic^y_j\big( w^{(i)},w^{(j)}
	\big)=\sum_{i,j}c^x_ic^y_j\big( v^{(i)},v^{(j)}\big)=\left(
	x,y\right)\,.
	\end{eqnarray}
	If the vectors $v^{(i)}$ are not linearly independent, without loss
	of generality we assume that $v^{(1)},\dots , v^{(k)}$ is a maximal
	collection of linearly independent vectors. Using \eqref{cosug}
	we deduce that  $w^{(1)},\dots , w^{(k)}$ too is a maximal
	collection of linearly independent vectors among $w^{(1)},\dots ,
	w^{(n)}$. Indeed if for example we have $v^{(l)}=\alpha
	v^{(i)}+\beta v^{(j)}$ then we have
	\begin{equation}
	\big| w^{(l)}-\alpha w^{(i)}-\beta w^{(j)}\big|^2 =
	 \big| v^{(l)}-\alpha v^{(i)}-\beta v^{(j)}\big|^2 =0\,,\label{bus}
	\end{equation}
	that implies
	$w^{(l)}=\alpha w^{(i)}+\beta w^{(j)}$.

	Let us call $\tilde{v}^{(1)},\dots,\tilde{v}^{(n)}$ a basis whose
	vectors are defined as follows. We have that
	$\tilde{v}^{(i)}=v^{(i)}$ for $i=1,\dots,k$. The remaining vectors
	$\tilde{v}^{(k+1)},\dots,\tilde{v}^{(n)}$ are an orthonormal basis
	of the orthogonal complement of $\mathrm{Span}\left\{v^{(1)},\dots,
	v^{(k)}\right\}$. Likewise we call $\tilde{w}^{(1)},\dots,
	\tilde{w}^{(n)}$ a basis whose vectors are defined as follows. We
	have that $\tilde{w}^{(i)}=w^{(i)}$ for $i=1,\dots,k$. The remaining
	vectors $\tilde{w}^{(k+1)},\dots,\tilde{w}^{(n)}$ are an orthonormal
	basis of the orthogonal complement of $\mathrm{Span}\left\{w^{(1)},\dots,
	w^{(k)}\right\}$.
	
	Given any $x\in \mathbb R^n$ we have $x=\sum_ic^x_i\tilde{v}^{(i)}$ and
	we define the matrix $O$ by setting
	$$
	Ox\coloneqq \sum_ic^x_iO\tilde{v}^{(i)}=\sum_ic^x_i\tilde{w}^{(i)}\,.
	$$
	Since
	$$
	\big( \tilde{w}^{(i)},\tilde{w}^{(j)}\big) =\big(
	\tilde{v}^{(i)},\tilde{v}^{(j)}\big) \,, \qquad \forall \,i,j
	$$
	we obtain as before that $O$ is orthogonal. With an argument similar
	to the one in \eqref{bus} it is easy to obtain that
	$w^{(i)}=Ov^{(i)}$.
\end{proof}

\begin{lemma}
	Let $v^{(1)},\dots,v^{(n)}$ and $w^{(1)},\dots,w^{(n)}$ be two
	collections of vectors in $\mathbb R^n$. We have that
	\begin{equation}
	\big|v^{(i)}-v^{(j)}\big|=\big|w^{(i)}-w^{(j)}\big|\,, \qquad  \forall \,i,j\,,
	\label{pimpa}
	\end{equation}
	If and only if there exists $O\in O\left(n\right)$ and a vector $b\in \mathbb R^n$ such
	that
	\begin{equation}
	w^{(i)}=Ov^{(i)}+b\,, \qquad \forall i\,. \label{bussole}
	\end{equation}
	\label{pimpalemma}
\end{lemma}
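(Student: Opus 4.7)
The ``if'' direction is immediate: if $w^{(i)}=Ov^{(i)}+b$ then
$w^{(i)}-w^{(j)}=O(v^{(i)}-v^{(j)})$, and since $O$ preserves the Euclidean norm the distances coincide. So the real content is the converse, and the plan is to reduce it to the already proved Lemma \ref{coseni} by translating both families so that they are based at the origin.

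Concretely, I would set $\tilde v^{(i)}\coloneqq v^{(i)}-v^{(1)}$ and $\tilde w^{(i)}\coloneqq w^{(i)}-w^{(1)}$ for $i=1,\dots,n$. By construction $\tilde v^{(1)}=\tilde w^{(1)}=0$, and hypothesis \eqref{pimpa} gives
\begin{equation}
\big|\tilde v^{(i)}-\tilde v^{(j)}\big|=\big|v^{(i)}-v^{(j)}\big|=\big|w^{(i)}-w^{(j)}\big|=\big|\tilde w^{(i)}-\tilde w^{(j)}\big|,
\end{equation}
and in particular, taking $j=1$,
\begin{equation}
\big|\tilde v^{(i)}\big|=\big|v^{(i)}-v^{(1)}\big|=\big|w^{(i)}-w^{(1)}\big|=\big|\tilde w^{(i)}\big|.
\end{equation}

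The next step is the polarization identity
\begin{equation}
\big(\tilde v^{(i)},\tilde v^{(j)}\big)=\tfrac12\Big(\big|\tilde v^{(i)}\big|^2+\big|\tilde v^{(j)}\big|^2-\big|\tilde v^{(i)}-\tilde v^{(j)}\big|^2\Big),
\end{equation}
and the analogous formula for the $\tilde w$'s, which together with the two displays above yield
\begin{equation}
\big(\tilde v^{(i)},\tilde v^{(j)}\big)=\big(\tilde w^{(i)},\tilde w^{(j)}\big)\qquad\forall\,i,j.
\end{equation}
This is exactly the hypothesis of Lemma \ref{coseni}, so there exists $O\in O(n)$ with $\tilde w^{(i)}=O\tilde v^{(i)}$ for every $i$. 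Undoing the translations, $w^{(i)}-w^{(1)}=O\bigl(v^{(i)}-v^{(1)}\bigr)$, hence $w^{(i)}=Ov^{(i)}+b$ with $b\coloneqq w^{(1)}-Ov^{(1)}$, which is the desired form \eqref{bussole}.

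No step looks delicate: the only point where one might worry is the invocation of Lemma \ref{coseni}, but that Lemma is already stated and proved in full generality (including when the $v^{(i)}$'s fail to be linearly independent), so no additional case analysis is needed here. The heart of the argument is just that ``distance preserving $+$ fixing the origin'' forces a linear isometry, i.e.\ an element of $O(n)$, via polarization.
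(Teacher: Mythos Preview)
Your proof is correct and follows the same overall strategy as the paper: translate both families so that equal distances become equal inner products, then invoke Lemma \ref{coseni}. The difference is in the choice of translation. The paper centers each family at its centroid, $\tilde v^{(i)}=v^{(i)}-\tfrac1n\sum_j v^{(j)}$, and then needs an auxiliary argument (summing the polarization identity first over $j$, then over $i$) to extract $|\tilde v^{(i)}|=|\tilde w^{(i)}|$ from the zero-sum condition. You instead translate so that the first point goes to the origin, which gives $|\tilde v^{(i)}|=|\tilde w^{(i)}|$ immediately from the hypothesis with $j=1$. Your route is therefore a bit shorter and more direct; the paper's centroid choice is more symmetric in the indices but pays for that symmetry with an extra computation. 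Either way the reduction to Lemma \ref{coseni} is the same, and your remark that no separate treatment of the degenerate case is needed (since Lemma \ref{coseni} already handles linear dependence) is correct.
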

\begin{proof}
	If \eqref{bussole} holds then we have clearly \eqref{pimpa}. It remains to prove the converse statement.

	First we prove the Theorem under the additional assumption
	$\sum_iv^{(i)}=\sum_iw^{(i)}=0$. We have
	\begin{eqnarray}
	& & \big| v^{(i)}\big|^2+\big| v^{(j)} \big|^2
	-2\,\big( v^{(i)},v^{(j)}\big) \nonumber \\
	& &=
	\big|v^{(i)}-v^{(j)}\big|^2=\big| w^{(i)}-w^{(j)}\big|^2\nonumber \\
	& & =\big| w^{(i)}\big|^2+\big| w^{(j)} \big|^2
	-2\,\big( w^{(i)},w^{(j)}\big)\,.\label{nonnabruna}
	\end{eqnarray}
	Summing over $j$ the two extremal sides in \eqref{nonnabruna} we get
	\begin{equation}
	n\big| v^{(i)}\big|^2+\sum_{l=1}^n\big| v^{(l)}\big|^2=
	n\big| w^{(i)}\big|^2+\sum_{l=1}^n\big| w^{(l)}\big|^2\,, \qquad \forall\, i\,.
	\label{sfleggo}
	\end{equation}
	From \eqref{sfleggo} we deduce
	\begin{equation}
	\big| v^{(i)}\big|^2= \big| w^{(i)}\big|^2\,, \qquad \forall\, i\,.
	\label{svenuto}
	\end{equation}
	Indeed  summing over $i$ both sides of
	\eqref{sfleggo} we get
	$$
	\sum_{l=1}^n\big| v^{(l)}\big|^2= \sum_{l=1}^n\big|
	w^{(l)}\big|^2\,,
	$$
	that inserted  in \eqref{sfleggo} gives \eqref{svenuto}. Using
	\eqref{svenuto} in \eqref{nonnabruna} we deduce
	\begin{equation}
	\big( v^{(i)},v^{(j)}\big)=
	\big( w^{(i)},w^{(j)} \big)\,, \qquad \forall \,i,j\,. \label{vicino}
	\end{equation}
	The validity of \eqref{bussole} with $b=0$ now follows from Lemma
	\ref{coseni}.

	The general case is then obtained as follows. Define the vectors
	$\tilde{v}^{(i)}\coloneqq v^{(i)}-\frac 1n \sum_j v^{(j)}$ and
	$\tilde{w}^{(i)}\coloneqq w^{(i)}-\frac 1n \sum_j w^{(j)}$. Then we have
	$\sum_i\tilde{v}^{(i)}=\sum_i\tilde{w}^{(i)}=0$ and moreover
	$\left|\tilde{v}^{(i)}-\tilde{v}^{(j)}\right|=\left|\tilde{w}^{(i)}-\tilde{w}^{(j)}\right|$.
	From the result just proved we know that there exists $O\in O\left(n\right)$
	such that $\tilde{w}^{(i)}=O\tilde{v}^{(i)}$. Using linearity and
	the definitions of the vector we obtain \eqref{bussole} with
	$$
	b=\frac 1n \sum_iw^{(i)}-\frac 1n \sum_iOv^{(i)}\,.
	$$
\end{proof}

We show here that the extra terms in \eqref{extra} are indeed negligible when $N$ is large.

\begin{lemma}\label{fine}
	We have
	\begin{equation}
	\lim_{N\to \infty}\sup_{N_1+N_2=N}\frac{\left|\underline k\left(N\right)\right|-\left|\underline k\left(N_1\right)\right|-\left|\underline k\left(N_2\right)\right|}{|\underline k(N)|}=0\,.
	\end{equation}
\end{lemma}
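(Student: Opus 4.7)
\textbf{Proof plan for Lemma \ref{fine}.}

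The plan is to bound the numerator uniformly in the decomposition $N_1 + N_2 = N$ by a quantity $|I(N)|$ that counts the indices possibly contributing to the discrepancy, and then show $|I(N)| = o(N)$ using the summability $\sum_i \gamma_i = 1$; combined with Lemma \ref{senzanome} this yields the claim.

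First I would use the elementary identity $\lfloor x+y\rfloor - \lfloor x\rfloor - \lfloor y\rfloor \in \{0,1\}$ for $x,y\ge 0$. Recalling that $k_i(N) = \lfloor N\gamma_i\rfloor$ with $\gamma_i = \log\alpha_i/\log 2 > 0$ and $\sum_i \gamma_i = 1$, and using $N\gamma_i = N_1\gamma_i + N_2\gamma_i$, this gives
\begin{equation}
k_i(N) - k_i(N_1) - k_i(N_2) \in \{0,1\}
\end{equation}
for every $i\ge 1$. If $\gamma_i < 1/N$ then all three floors vanish (since $N_1,N_2 \le N$), so the difference is $0$. Hence, setting
\begin{equation}
I(N) \coloneqq \{i\ge 1 : \gamma_i \ge 1/N\}
\end{equation}
and summing over $i$, I obtain the uniform bound
\begin{equation}\label{planbound}
|\underline k(N)| - |\underline k(N_1)| - |\underline k(N_2)| \le |I(N)|, \qquad \text{for every } N_1 + N_2 = N.
\end{equation}

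Next I would check that $|I(N)|/N \to 0$. Fix $\epsilon > 0$; by the convergence of $\sum_i \gamma_i$, pick $M = M(\epsilon)$ with $\sum_{i > M}\gamma_i < \epsilon$. Each index $i > M$ that lies in $I(N)$ contributes at least $1/N$ to this tail, so at most $\epsilon N$ such indices can exist; therefore $|I(N)| \le M + \epsilon N$, and letting $N\to\infty$ and then $\epsilon\to 0$ gives $|I(N)|/N \to 0$.

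Finally, combining \eqref{planbound} with Lemma \ref{senzanome} (which gives $|\underline k(N)|/N \to 1$),
\begin{equation}
\sup_{N_1+N_2 = N}\frac{|\underline k(N)| - |\underline k(N_1)| - |\underline k(N_2)|}{|\underline k(N)|} \le \frac{|I(N)|/N}{|\underline k(N)|/N} \longrightarrow 0,
\end{equation}
which is the statement. The only non-routine step is the bound $|I(N)|=o(N)$, but since this is an immediate consequence of summability I do not expect any serious obstacle.
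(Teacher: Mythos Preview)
Your proof is correct and follows essentially the same route as the paper: bound the numerator by $|I(N)|$ (the paper defines $I(N)=\{i:k_i(N)>0\}$, which coincides with your $\{i:\gamma_i\ge 1/N\}$), show $|I(N)|=o(N)$ from the summability of the $\gamma_i$, and conclude via Lemma~\ref{senzanome}. The only cosmetic difference is that the paper establishes $|I(N)|/N\to 0$ by partitioning $I(N)$ into shells $J(\ell)=I(\ell)\setminus I(\ell-1)$ and invoking dominated convergence, whereas your direct $\epsilon$--$M$ tail estimate is a slightly shorter path to the same conclusion.
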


\begin{proof}
	Let us call $I\left(N\right)\subseteq \mathbb N$ the set $I\left(N\right)\coloneqq\left\{i\,:\, k_i\left(N\right)>0\right\}$. We define also
	$J\left(N\right)\coloneqq I\left(N\right)\cap \left(I\left(N-1\right)\right)^C$.
	We have for any $N_1+N_2=N$ that
	\begin{equation}\label{critto}
	\left|\underline k\left(N\right)\right|-\left|\underline k\left(N_1\right)\right|-\left|\underline k\left(N_2\right)\right|\leq |I(N)|
	\end{equation}
	By definition we have that if $i\in J\left(N\right)$ then $\frac 1N\leq \gamma_i<\frac{1}{N-1}$. We have therefore
	\begin{equation}\label{equi}
	1=\sum_{i=1}^{+\infty}\gamma_i\geq \sum_{\ell=1}^{+\infty}\left|J\left(\ell\right)\right|\frac 1\ell\,.
	\end{equation}
	We deduce therefore that the series on the right hand side has to be convergent. Since $|I(N)|=\sum_{\ell=1}^N|J(\ell)|$
	we have
	\begin{equation}
	\frac{|I(N)|}{N}=\sum_{\ell=1}^N\frac{\ell}{N}|J(\ell)|\frac 1\ell\,.
	\end{equation}
	The series on the right hand side of \eqref{equi} is convergent, thus we deduce the statement by the dominated convergence Theorem. This, together with \eqref{critto} and Lemma \ref{senzanome}, concludes the proof.
\end{proof}

\end{document}